\DeclarePairedDelimiter\ceil{\lceil}{\rceil}
\algnewcommand{\IfThenElse}[3]{
	\State \algorithmicif\ #1\ \algorithmicthen\ #2\ \algorithmicelse\ #3}
\algnewcommand{\IfThen}[2]{
	\State \algorithmicif\ #1\ \algorithmicthen\ #2\ }
\newcommand{\N}{\mathbb{N}}
\newcommand{\FPTAS}{\textnormal{\sffamily FPTAS}\xspace}
\newcommand{\NPClass}{\textnormal{\sffamily NP}\xspace}
\newcommand{\PClass}{\textnormal{\sffamily P}\xspace}
\newcommand{\CostColoring}{\textnormal{\sffamily Cost\xspace Coloring}\xspace}
\newcommand{\ChromaticNumber}{\textnormal{\sffamily Chromatic Number}\xspace}
\newcommand{\completebipartite}{\textit{complete bipartite}}
\newcommand{\emptygraph}{\textit{empty}}
\newcommand{\bipartite}{\textit{bipartite}}
\newcommand{\Osymbol}{\textnormal{O}}
\newcommand{\osymbol}{\textnormal{o}}
\newcommand{\omegasymbol}{\omega}
\newcommand{\sopt}{S_{opt}}
\newcommand{\salg}{S_{alg}}
\newcommand{\sch}{S}
\newcommand{\cmaxcost}{C_{\max}}
\newcommand{\optcmaxcost}{C_{\max}^{*}}
\newcolumntype{C}[1]{>{\centering\arraybackslash}p{#1}}
\newcommand{\OnePreExt}{\textnormal{\sffamily 1-PrExt}\xspace}
\newcommand{\gnnp}{\mathbb{G}_{n,n,p(n)}}
\newcommand{\psum}{\sum p_j}
\newcommand{\optcmaxcostlb}{C_{\max}^{**}}
\newcommand{\job}{J}
\title{Scheduling on uniform and unrelated machines with bipartite incompatibility graphs}
\author{Tytus Pikies}{Dept. of Algorithims and System Modelling, ETI Faculty, \\ Gda\'nsk University of Technology, 11/12 Gabriela Narutowicza Street, 80-233 Gda\'nsk, Poland}{tytpikie@pg.edu.pl}{}{}
\author{Hanna Furma\'nczyk}{Institute of Informatics, Faculty of Mathematics, Physics and Informatics,\\University of Gda\'nsk, 57 Wita  Stwosza Street, 80-309 Gda\'nsk, Poland}{hanna.furmanczyk@ug.edu.pl}{}{}
\authorrunning{T. Pikies and H. Furma\'nczyk}
\keywords{bipartite graph, scheduling, uniform machines, unrelated machines, incompatibility graph, random bipartite graph}
\begin{document}
	\maketitle
\begin{abstract}
    In this paper the problem of scheduling of jobs on parallel machines under incompatibility relation is considered. 	
    In this model a binary relation between jobs is given and no two jobs that are in the relation can be scheduled on the same machine. 
    In particular, we consider job scheduling under incompatibility relation forming bipartite graphs, under makespan optimality criterion, on uniform and unrelated machines.
    
	We show that no algorithm can achieve a good approximation ratio for uniform machines, even for a case of unit time jobs, under $\PClass \neq \NPClass$. 
	We also provide an approximation algorithm that achieves the best possible approximation ratio, even for the case of jobs of arbitrary lengths $p_j$, under the same assumption.
	Precisely, we present an $\Osymbol(n^{1/2-\epsilon})$ inapproximability bound, for any $\epsilon > 0$; and  $\sqrt{\psum}$-approximation algorithm, respectively.
	To enrich the analysis, bipartite graphs generated randomly according to Gilbert's model $\mathcal{G}_{n,n,p(n)}$ are considered. 		
	For a broad class of $p(n)$ functions we show that there exists an algorithm producing a schedule with makespan almost surely at most twice the optimum. 
	Due to our knowledge, this is the first study of randomly generated graphs in the context of scheduling in the considered model.
	
    For unrelated machines, an FPTAS for $R2|G = \bipartite|\cmaxcost$ is provided. 
	We also show that there is no algorithm of approximation ratio $\Osymbol(n^bp_{\max}^{1-\epsilon})$, even for $Rm|G = \bipartite|\cmaxcost$ for $m \ge 3$ and any $\epsilon > 0$, $b > 0$, unless $\PClass = \NPClass$.
\end{abstract}

	\section{Introduction}
	Imagine that we are a government of some country during an epidemic.   
	At our disposal are medical facilities.
	Due to various reasons the medical facilities can process different number of patients daily.
	Imagine that we want to inoculate a population with ethnic tensions, not exacerbating the tensions, using these facilities.  
	The population consists of $2$ distinct ethnic/religious/ideological groups.
	For any of these groups some of its members may be in conflict with some members of the other group.
	Of course, inter group tensions also exists, but they are not of our interest.
	We would like to inoculate the population at facilities, minimizing the time of inoculation campaign, provided that for any $2$ members coming from the distinct groups and assigned to the same facility there is no conflict between them.
	
	In the literature there are many variants of scheduling problems considered.
	In this paper we consider one of classical models of jobs scheduling enriched with an additional constraint.
	Given a set of parallel machines $M=\{M_1,\ldots,M_m\}$, a set of jobs $J=\{\job_1, \ldots, \job_n\}$, and a simple graph $G = (J, E)$ corresponding to the incompatibility relation, our goal is to assign each of the jobs to a machine in order to minimize the makespan (denoted by $\cmaxcost$).
	The constraint is that the jobs assigned to any machine have to form independent set in $G$. 
	In particular, in this paper we consider the case when $G$ is bipartite.
	 
	We assume that with each $\job_j \in J$ there is associated a natural number, called processing requirement, denoted $p_j$.
	By $\psum$ we denote $\sum_{j\in[n]}p_j$.
	By $p_{\max}$ we denote $\max_{j \in [n]}p_j$.
	We consider the following three variants of machine environment.
	\begin{itemize}
		\item 	\emph{Identical} machines, denoted by $\alpha = P$ in the three field notation $\alpha|\beta|\gamma$ of \cite{lawler1982recent}.
		        The processing time of each job $\job_j$ is equal to $p_j$, for each $j \in [n]$. 
		\item   \emph{Uniform} machines, denoted by $Q$.
		        In this variant a machine $M_i$ runs with speed $s_i\geq 1$, $i\in[m]$.
		        The processing time of $\job_j$ on the machine $M_i$ is equal to $p_j/s_i$. 
				For simplicity of presentation it is assumed that the machines are sorted in a non-increasing order of speeds, i.e. $s_1 \geq \cdots \geq s_m$. 
		\item 	For \emph{unrelated} machines, denoted by $R$, the processing time of a job depends on machine in arbitrary way.
				In this variant there are given $mn$ values $p_{i,j}$ for each  $i \in [m]$ and $j \in [n]$, defining the processing time of $\job_j$ on $M_i$. 
	\end{itemize}
    By $C_j(S)$ we denote the completion time of $\job_j$ in a schedule $S$.
    For brevity, we shorten the notation to $C_j$ when $S$ is clear from the context.
	By $\cmaxcost(S)$ we understand $\max_{\job_j \in J}C_j(S)$, i.e. a makespan of $S$.
	By $\optcmaxcost$ we denote $\min_{S \in \mathcal{S}} \cmaxcost(S)$, where $\mathcal{S}$ is the set of all proper schedules for the instance.
	Also, due to widely used association of the job scheduling under $\cmaxcost$ criterion and bin packing \cite{HochbaumSAPolynomial88} we use the term \emph{capacity} of a machine in a given time.
	It is the maximum processing requirement that the machine can handle in the given time. 
	Due to this we also talk about covering a set of jobs $J'$ by a set of machines $M'$ in a given time. 
	It means that the sum of capacities of $M'$ in a given time is at least as $\sum_{\job_j \in J'}p_j$.
	By a \emph{coloring} of a graph $(V, E)$ we understand a function $f : V \rightarrow \mathcal{C}$, such that $f(v_1) \neq f(v_2)$ for each $\{v_1, v_2\} \in E$.
	The set $\mathcal{C} = \{c_1, \ldots, \}$ is finite or not, depending on the problem under consideration.
	By a weighted graph we understand a tuple $(V, E, w)$ where $(V, E)$ forms a graph and $w: V \rightarrow \N$ is the weight of the vertex.
	For other notions, of graph theory and scheduling theory we refer the reader to standard textbooks, \cite{DiestelGraphTheory} and \cite{Brucker1999Scheduling}, respectively.

	Clearly, we can model our motivational problem using the presented framework.
	The people are modeled as jobs.
	A conflict between two persons is modeled as an edge between the corresponding jobs/verticies in the incompatibility graph.
	The medical facilities are modeled as machines and the daily capacity of centers as the speeds of the machines. The division into groups becomes a division into parts of the graph.
	The criteria corresponds directly.
	
	In the next section we discuss the related work.
	In \cref{section:uniform} we study uniform machines. 
	We prove that there is no algorithm with a good worst-case approximation ratio for $Q3|G = \bipartite, p_j=1|\cmaxcost$.
	Also, we provide an algorithm for $Q|G = \bipartite|\cmaxcost$ which in a sense has the best possible approximation ratio.
	Despite the inapproximability result, in \cref{section:random} we enrich our analysis by considering data generated randomly, according to Gilbert's model $\mathcal{G}_{n,n,p(n)}$ \cite{Gilber59Random}.
	For such a data we propose an algorithm that constructs a schedule $S$ for which $\cmaxcost(S) \le 2\optcmaxcost$ a.a.s. (asymptotically almost surely).
	In \cref{section:unrelated} we complete our study by considering unrelated machines model, presenting an FPTAS for $R2|G= \bipartite|\cmaxcost$ and inapproximability results for $R3|G = \completebipartite|\cmaxcost$.
	Finally, in the last section we briefly discuss some open problems.

	Hence, we establish the complexity status of scheduling on uniform machines. 
	This result justifies the research on families of graphs more restrictive than just being bipartite, like bipartite with bounded degree considered in \cite{FurmanczykKSchedulingOfUnit2017}, \cite{FURMANCZYK2018210}, and \cite{Pikies_T._Better_2019}.
	Moreover, due to our best knowledge, we present the first study of random graphs within the considered framework of scheduling.
	Finally we establish the status of the scheduling problem on unrelated machines.
	By this, we generalize earlier results on $P2|G=\bipartite|\cmaxcost$ \cite{bodlaender1994scheduling}.

	\section{Related Work}
	First at all, let us notice that when $G = \emptygraph$, i.e. $G$ is a graph without edges, the problem $\alpha|G = \emptygraph|\cmaxcost$, for $\alpha \in \{P, Q, R\}$ is equivalent to $\alpha||\cmaxcost$.
	In it is known that the problem is NP-hard even for two for identical machines \cite{GareyJComputersAndIntractability1979}. 
	For a finite number of unrelated machines, and hence also for uniform machines, there exist FPTASes \cite{HorowitzExactAndApproximate1976} and \cite{JansesnPImprovedApproximation2001}.
	We mention the second result due to much better time and memory complexities.
	When the number of machines is part of the input, the problem for identical machines is strongly NP-hard  \cite{GareyJComputersAndIntractability1979}. 
	For uniform machines and hence also for identical machines, there is a PTAS \cite{HochbaumSAPolynomial88}. 
	Moreover, \cite{DBLP:journals/siamdm/Jansen10} provided an EPTAS for uniform machines, significantly improving the earlier result of  \cite{HochbaumSAPolynomial88}.  
	Further results, in particular approximation algorithms, can be easily found in the literature (for an example cf. \cite{LenstraApproximation1990}).
	For unrelated machines there is no $(\frac{3}{2} - \epsilon)$-approximate algorithm, but there is $2$-approximate algorithm, by the seminal result of \cite{LenstraApproximation1990}.
	
	The problem of job scheduling with incompatibility constraints, as defined in this paper, was introduced for identical machines in \cite{bodlaender1994scheduling}, and the research is still actively conducted.
	The authors of \cite{bodlaender1994scheduling} gave polynomial time approximation algorithms for scheduling problem with incompatibility graph being $k$-colorable and the number of machines $m \ge k + 1$ with approximation ratio depending on the ratio of $m$ to $k$. 
	The authors proved that there is a polynomial time $2$-approximation algorithm for  $P|G=\bipartite|\cmaxcost$ with $m \geq 3$.
	Moreover, they proved that it achieves the best possible approximation ratio, under $\PClass \neq \NPClass$.
	Also, in passing, they proposed an FPTAS for $P2|G = \bipartite|\cmaxcost$. 
	For trees, a subclass of bipartite graphs, they provided $\Osymbol(n\log n)$-time algorithm with approximation ratio equal to $5/3$.
	Finally, they provided an FPTAS for graph of bounded treewidth and fixed number of machines.
    The authors of \cite{FURMANCZYK2018210} proved NP-hardness of scheduling unit-time jobs forming 3-chromatic cubic graph on three uniform machines.
	Moreover, a polynomial time algorithm was given for bipartite cubic graphs and $3$ uniform machines. 	In \cite{Furmanczyk} it was proved NP-hardness for bipartite graphs and four uniform machines.
	In \cite{Pikies_T._Better_2019} there is  2-approximate algorithm for the problem $Qm| p_j = 1, G = bisubquartic|C_{\max}$ presented.
	Some further results for the problem with bipartite incompatibility graph are provided in \cite{MallekBBScheduling2019}. 
	In particular, the authors proved that the problem $Q|G = \completebipartite, p_j=1|\cmaxcost$ is NP-hard under binary encoding of graph, i.e. an encoding where only the number of jobs in each part is given as input data.
	Recently, in \cite{Pikies2020scheduling} results for complete multipartite incompatibility graphs were presented, bringing some exact or approximate algorithms for uniform machines, and inapproximability results for unrelated ones.
	For example, a polynomial time algorithm for the  $Q|G = \completebipartite, p_j=1|\cmaxcost$ under the customary unary encoding was given.
	There are also other papers where $P|G = graph, p_j=1|\cmaxcost$ with other classes of incompatible graphs was considered: cographs, interval graphs and bipartite graphs \cite{BodlaenderJOnTheComplexity93}.
	Also, a closely related problem of partitioning a graph into independent sets of bounded sizes was considered for complements of bipartite graph  and complements of interval graphs \cite{bodlaender1995restrictions}, and for split graphs \cite{lonc1991complexity}.
	For the last problem polynomial time algorithms were presented for these classes of graphs in the respective papers.
	Finally, recently, the problem for incompatibility graph equal to union of disjoint cliques (also called bags) was considered \cite{DasWieseBags2017}. 
	The authors gave a PTAS for identical machines and they proved there can be no $(\log n)^{1/4-\epsilon}$-approximation algorithm for any $\epsilon >0$, assuming that $\NPClass \not \subseteq ZPTIME(2^{(\log n )^{O(1)}})$ for unrelated machines. 
	
	\section{Some auxiliary observations}
	\label{section:related}
	\begin{definition}
		By an inequitable $2$-coloring of a $($possibly disconnected$)$ graph $G$ we understand a $2$-coloring $(V_1', V_2')$ of $G$ such that $V_1'$ is of maximum cardinality $($$V_1'$ is of maximum total weight, for weighted graphs$)$.
	\end{definition}
	Obviously, the inequitable $2$-coloring of bipartite graph $G=(V,E)$ can be computed in $\Osymbol(|V| + |E|)$ time.
		
	\begin{definition}
		The instance of \OnePreExt is given by: a graph $G$, a number $k \ge 3$, a sequence of vertices $(v_1, \ldots, v_k)$.
		The question is if there is a $k$-coloring $f$ of $G$, such that $f(v_1) = c_1, \ldots, f(v_k) = c_k$.
	\end{definition}
	Of course, for general graphs this problem is NP-complete, due to the fact that it is a generalization of \ChromaticNumber.
	We use the problem with fixed $k$ equal to $3$, for which we have the following.
	\begin{theorem}[\cite{bodlaender1994scheduling}]
		\label{theorem:OnePreExt3Bipartite}
		\OnePreExt is NP-complete for bipartite graphs and $k = 3$.
	\end{theorem}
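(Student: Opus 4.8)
The plan is to prove both membership in \NPClass and \NPHClass, the latter by a reduction from \ThreeSat. Membership is immediate: a proper $3$-coloring of $G$ that assigns the prescribed colors to $v_1,v_2,v_3$ is a certificate verifiable in linear time, so the problem lies in \NPClass.

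For hardness, given a \ThreeSat formula $\varphi$ with variables $x_1,\dots,x_n$ and clauses $C_1,\dots,C_m$, I would build a bipartite graph $G$ with parts $(A,B)$, using the three colors $\{1,2,3\}$, and precolor only three \emph{anchor} vertices $v_1,v_2,v_3$ with colors $1,2,3$; this respects the definition of \OnePreExt, which fixes exactly $k=3$ precolored vertices. The whole point is that every other color restriction must be \emph{forced} by the combinatorial structure rather than prescribed. The first step is therefore a \emph{source-multiplication} lemma: starting from the three anchors one uses constant-size bipartite gadgets (a vertex joined to sources of the two other colors, lying in the opposite part, is forced to the third color) to create polynomially many vertices forced to each color $c$ in each part $A$ and $B$. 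Call these \emph{color sources}. Having sources of every color in both parts, one imposes on any new vertex an arbitrary \emph{list} $L\subseteq\{1,2,3\}$ simply by joining it to one source of each color in $\{1,2,3\}\setminus L$; in particular a list $\{1,2\}$ turns a vertex into a Boolean variable under the convention $1=\textsf{true}$, $2=\textsf{false}$.

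With this toolkit in place I would encode each propositional variable by a list-$\{1,2\}$ vertex and route its value to the clauses where it occurs using two elementary gadgets whose correctness is a one-line case check: a cross-part edge between two list-$\{1,2\}$ vertices forces them \emph{unequal} (this also realizes negation), while two same-part list-$\{1,2\}$ vertices sharing a list-$\{1,2\}$ neighbor in the opposite part are forced \emph{equal}. I expect the \textbf{main obstacle} to be the clause gadget, and the reason is structural: equality and disequality constraints alone describe exactly a graph-$2$-coloring (XOR) system, which is solvable in polynomial time, so any correct reduction must introduce a genuinely non-linear gadget. Concretely, for each clause $C_j=(\ell_1\vee\ell_2\vee\ell_3)$ I need a constant-size bipartite gadget, attached to the three literal vertices and to color sources, that admits a proper $3$-coloring \emph{if and only if} the three literals are not simultaneously false, i.e.\ it must forbid exactly the single pattern in which all three literal vertices receive color $2$ while allowing every other pattern. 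The delicate part is realizing this forbidden-configuration constraint using only the third color as a degree of freedom and without creating an odd cycle; this is where the third color is essential and where all the work lies.

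Finally I would verify the reduction. Bipartiteness holds by construction, since every vertex is assigned to $A$ or $B$ in advance and all edges (including those to color sources and inside the clause gadgets) run between the two parts; the size of $G$ is linear in $n+m$. For correctness, a satisfying assignment of $\varphi$ yields colors $1/2$ on the variable vertices which, by the clause-gadget property, extend to a proper $3$-coloring fixing the anchors; conversely any such coloring reads off a truth assignment satisfying every clause. Hence $\varphi$ is satisfiable iff the constructed instance of \OnePreExt is a yes-instance, which together with membership proves the theorem.
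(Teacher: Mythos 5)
Your proposal is being compared against a paper that never proves this statement at all: the theorem is imported verbatim from \cite{bodlaender1994scheduling}, so the only question is whether your blind reduction stands on its own. Most of your scaffolding does: \NPClass membership, placing the three anchors so that color sources of every color can be bootstrapped in both parts, simulating arbitrary lists $L\subseteq\{1,2,3\}$ by edges to sources, and the equality/disequality gadgets for list-$\{1,2\}$ vertices are all correct; your observation that equality and disequality constraints alone form a polynomially solvable system is exactly the right diagnosis of where the difficulty must lie. But that is precisely where your proof stops. The clause gadget --- which you yourself say is ``where all the work lies'' --- is specified but never constructed: you state the property it must have (colorable iff the three literal vertices are not all colored $2$) and the obstructions to building it, but give no gadget and no argument that one exists. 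Since everything preceding it is routine, the unconstructed gadget \emph{is} the NP-hardness proof; as written, the reduction from \ThreeSat does not exist, and the final verification paragraph is vacuous because it quantifies over a gadget you never exhibit.

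The gap is fillable, and the missing idea is small but essential: a vertex with the full list $\{1,2,3\}$ adjacent to three vertices of the opposite part is colorable iff its neighborhood is \emph{not} rainbow, i.e.\ it forbids exactly the pattern in which its three neighbors receive three distinct colors. A clause gadget then only has to translate ``literal $\ell_i$ is false'' into ``auxiliary vertex $t_i$ is forced to color $i$,'' which takes one intermediate vertex per literal. Concretely, with $\ell_1,\ell_2,\ell_3$ in part $A$ carrying lists $\{1,2\}$: join $\ell_1$ to $u_1\in B$ with list $\{2,3\}$ and $u_1$ to $t_1\in A$ with list $\{1,3\}$ (so $\ell_1=2$ forces $t_1=1$, while $\ell_1=1$ allows $t_1=3$); join $\ell_2$ through a $\{1,2\}$-list vertex $u_2\in B$ to $t_2\in A$ with list $\{1,2\}$ (so $t_2=\ell_2$); join $\ell_3$ through a $\{1,2\}$-list vertex $u_3\in B$ to $t_3\in A$ with list $\{1,3\}$ (so $\ell_3=2$ forces $t_3=3$, while $\ell_3=1$ allows $t_3=1$); finally attach $w\in B$ with list $\{1,2,3\}$ to $t_1,t_2,t_3$. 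If all three literals are false then $(t_1,t_2,t_3)=(1,2,3)$ is forced and $w$ cannot be colored; a short case check shows every other truth pattern admits a non-rainbow choice of $(t_1,t_2,t_3)$, hence a proper coloring. All edges run between $A$ and $B$, so bipartiteness is preserved. With this (or an equivalent) construction and its verification added, your outline becomes a correct proof; without it, it is an outline.
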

	

	\section{Uniform Machines}
	\label{section:uniform}
	We have the following theorem as a corollary of the further presented \cref{theorem:FPTAS}.
	This is due to the fact that any FPTAS for $R2|G = \bipartite|\cmaxcost$ applied to a suitably prepared instances of $Q2|G = \bipartite, p_j=1|\cmaxcost$ with $\epsilon = \frac{1}{n+1}$ has to return an exact value.

	\newcounter{countertheoremuniform}
	\setcounter{countertheoremuniform}{\value{theorem}}
	\begin{theorem}
		There exists $\Osymbol(n^3)$ time algorithm for $Q2|G = \bipartite, p_j=1|\cmaxcost$.
		\label{theorem:Q2CmaxOpt}
	\end{theorem}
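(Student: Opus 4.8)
The plan is to turn the scheduling question into a purely combinatorial one about how many jobs land on the faster machine. First I would observe that, since there are only two machines, a feasible schedule is exactly a partition of $J$ into two independent sets $I_1, I_2$ (the jobs sent to $M_1$ and to $M_2$), i.e. a proper $2$-coloring of $G$. As $G$ is \bipartite, such colorings exist, and within each connected component the bipartition is unique up to swapping the two sides. Hence, writing the connected components of $G$ as $C_1, \dots, C_r$ and letting $(a_i, b_i)$ denote the sizes of the two color classes of $C_i$ (so $a_i + b_i = |C_i|$, and $(a_i,b_i)=(1,0)$ for an isolated vertex), every feasible schedule is obtained by choosing, independently for each component, which class is sent to $M_1$. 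Consequently the number of jobs on $M_1$ is $n_1 = \sum_{i=1}^{r} x_i$ with $x_i \in \{a_i, b_i\}$, and, since $p_j = 1$, the makespan depends only on $n_1$: it equals $f(n_1) := \max\{\,n_1/s_1,\ (n-n_1)/s_2\,\}$.

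Next I would exploit the shape of $f$. The term $n_1/s_1$ increases and $(n-n_1)/s_2$ decreases in $n_1$, crossing at $\beta = n s_1/(s_1+s_2)$, so $f$ is non-increasing for $n_1 \le \beta$ and non-decreasing for $n_1 \ge \beta$; in particular $f$ is unimodal, and the minimum over any set of admissible values of $n_1$ is attained at the admissible value closest to $\beta$ from below or from above. It therefore suffices to (i) determine the set $A \subseteq \{0,\dots,n\}$ of \emph{achievable} loads $n_1 = \sum_i x_i$, and (ii) evaluate $f$ on the at most two elements of $A$ straddling $\beta$ (or simply on all of $A$), returning the best value together with the corresponding choice of sides, which yields the schedule.

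For step (i) I would run a subset-sum style dynamic program over the components: maintain a Boolean array $D$ indexed by $0,\dots,n$ with $D[0]$ true, and for each component replace the current achievable set by its translates under $+a_i$ and $+b_i$. Because the component sizes sum to $n$, the table has size $n+1$ and each of the $r \le n$ updates costs $\Osymbol(n)$, so $A$ is computed in $\Osymbol(n^2)$ time; the components and their bipartitions are found in $\Osymbol(|V|+|E|)$ time, and the final selection and reconstruction of the assignment cost $\Osymbol(n)$. This stays comfortably within the claimed $\Osymbol(n^3)$ bound (indeed it is $\Osymbol(n^2)$).

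The only genuine obstacle is step (i): the achievable loads form a subset-sum set constrained by the component structure, and it need not contain the value nearest to $\beta$, so one really must search over $A$ rather than simply balance the two machines. What rescues polynomiality is that all the weights are bounded by the polynomially sized total $\psum = n$, making the dynamic program efficient. (Alternatively, the statement follows as a corollary of the forthcoming \cref{theorem:FPTAS} by running that FPTAS on the induced $R2|G=\bipartite|\cmaxcost$ instance with processing times $1/s_i$ and a sufficiently small accuracy parameter; I prefer the direct argument above, since it is self-contained and exact by construction.)
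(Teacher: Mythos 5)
Your main argument is correct, and it takes a genuinely different route from the paper's proof. The paper derives \cref{theorem:Q2CmaxOpt} from the FPTAS for $R2|G=\bipartite|\cmaxcost$ (\cref{theorem:FPTAS}): it enumerates the $\Osymbol(n)$ candidate splits $(n_1,n_2)$ with $n_1+n_2=n$, and for each builds an artificial instance with processing times $p_{i,j}=n_1n_2/n_i$, so that the optimum equals $n_1n_2$ exactly when the split is feasible, while any deviation inflates the makespan by a factor of at least $1+1/n_1>1+1/n$; running the FPTAS with $\epsilon=1/n$ on each such instance therefore acts as an exact feasibility oracle, and the best feasible split is then selected. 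You instead argue directly: a two-machine schedule is a proper $2$-coloring of $G$, the bipartition of each connected component is unique up to swapping sides, and with unit jobs the makespan is the unimodal function $f(n_1)=\max\{n_1/s_1,\,(n-n_1)/s_2\}$ of the load on $M_1$, so a subset-sum dynamic program over components computes the set of achievable loads and the optimum in $\Osymbol(n^2+|E|)$ time. Your approach is self-contained (it does not depend on the later FPTAS machinery), exact by construction, and faster than the claimed $\Osymbol(n^3)$; the paper's approach buys economy of means, reusing one tool for both the unrelated and the uniform settings.

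One caution about your parenthetical aside: running \cref{theorem:FPTAS} a single time on the natural instance with processing times $1/s_i$ and ``sufficiently small'' accuracy is precisely what the paper warns against (``the FPTAS cannot be applied directly, but has to be applied to a set of instances''). Since the speeds are arbitrary, two distinct achievable makespans can differ by a relative factor far smaller than $1+1/\mathrm{poly}(n)$, so no fixed polynomial accuracy forces an approximate solution to be optimal; the paper's rescaled processing times $n_1n_2/n_i$ are exactly the device that manufactures a relative gap of $1+1/n$. This does not affect your main proof, which you rightly prefer and which stands on its own.
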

	However, the FPTAS cannot be applied directly, but has to be applied to a set of instances.
	For the details we refer the reader to the appendix.
	
	The situation for more than $2$ machines becomes completely different.
	We prove that there is no $\Osymbol(n^{\frac{1}{2}-\epsilon})$-approximate algorithm for $Qm|G = \bipartite, p_j = 1|\cmaxcost$, where $m \ge 3$.
	
	We start with a simple observation about necessary components.
	The observations follows from straightforward case analyses of possible colorings of the considered components.
	\begin{lemma}
		Consider any graph $G$, and any set $\mathcal{C}=\{c_1,c_2,\ldots\}$ of at least $2$ colors, and any proper coloring of $G$ using these colors.
		Assume that there is a vertex $v$ in $G$ connected to the component presented in \textbf{\cref{figure:blockers1}}.
		Then at least one of the following cases holds:
		\begin{itemize}
			\item $v$ is colored with color other than $c_1$,
			\item or at least $x$ vertices are colored using colors from $\mathcal{C} \setminus \{c_1\}$.
		\end{itemize}
	\end{lemma}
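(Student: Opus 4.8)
The plan is to replace the statement by its contrapositive and then to read off the forced colours directly from the structure of the attached component. If $v$ is coloured with a colour different from $c_1$, the first bullet already holds and there is nothing to prove, so the entire content of the lemma lives in the complementary case. I would therefore fix an arbitrary proper colouring $f$ of $G$ and assume, for the remainder of the argument, that $f(v) = c_1$; the goal becomes to exhibit at least $x$ distinct vertices of the component that receive colours in $\mathcal{C} \setminus \{c_1\}$.

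First I would name the vertices of the component exactly as they are drawn in \cref{figure:blockers1}, together with the internal edges and the edge(s) joining $v$ to the component. Using only that $f$ is proper, I would propagate the constraint $f(v) = c_1$ along these edges: every neighbour of $v$ inside the component must avoid $c_1$, every further vertex that is adjacent to a vertex already known to be coloured $c_1$ must likewise avoid it, and so on. Since the component is a small fixed gadget, this propagation is a finite case analysis on which of the intermediate vertices take $c_1$ and which do not, and in each branch I would maintain an explicit set of vertices that are provably barred from $c_1$.

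The observation that converts ``barred from $c_1$'' into the second bullet is that any such barred vertex is still assigned some colour by $f$, so that colour automatically lies in $\mathcal{C} \setminus \{c_1\}$. It then remains only to check, branch by branch, that the set of barred vertices has cardinality at least $x$. The decisive bookkeeping is to examine the worst branch, namely the one in which as many intermediate vertices as possible are permitted to take $c_1$, and to verify that even there at least $x$ vertices are pushed onto the other colours; this is exactly the point where the particular wiring of the gadget in \cref{figure:blockers1} is used.

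The step I expect to be the main obstacle is making the count robust against the size of the palette. The lemma permits $\mathcal{C}$ to contain arbitrarily many colours, so I must not assume that the component is $2$-coloured: the convenient reasoning ``one side of a bipartite block is $c_1$, hence the other side carries a single fixed second colour'' is unavailable, and the argument must rely solely on which vertices can or cannot be assigned $c_1$, treating all of $\mathcal{C} \setminus \{c_1\}$ uniformly. Once the case analysis is phrased in these terms, each case reduces to counting forced vertices, and the bound $x$ follows from the construction of the component.
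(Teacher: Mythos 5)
Your proposal is correct and coincides with what the paper does: the paper offers no written proof beyond asserting the lemma "follows from straightforward case analyses of possible colorings," and your propagation argument is precisely that analysis — assume $f(v) = c_1$, note every vertex of the gadget adjacent to $v$ is barred from $c_1$ and hence receives a colour in $\mathcal{C} \setminus \{c_1\}$, and count. For this particular gadget $H_1(x)$ the branching you anticipate never materializes, since all $x$ vertices $v_1, \ldots, v_x$ are direct neighbours of $v$, so the single propagation step already bars all of them; your insistence on treating $\mathcal{C} \setminus \{c_1\}$ uniformly rather than assuming a $2$-colouring is exactly the right precaution.
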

	
	\begin{lemma}
		Consider any graph $G$, and any set $\mathcal{C}=\{c_1,c_2,\ldots\}$ of at least $3$ colors, and any proper coloring using these colors.
		Assume that there is a vertex $v$ in $G$ connected to the component presented in \textbf{\cref{figure:blockers2}}.
		Then at least one of the following cases holds:
		\begin{itemize}
			\item $v$ is colored with color other than $c_2$,
			\item or at least $x'$ vertices are assigned colors from $\mathcal{C} \setminus \{c_1, c_2\}$,
			\item or at least $x$ vertices are assigned colors from $\mathcal{C} \setminus \{c_1\}$. 
		\end{itemize}
	\end{lemma}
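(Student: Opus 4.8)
The plan is to run a short case analysis on the restriction of the given coloring to the component of \cref{figure:blockers2}, reducing everything to the dichotomy already established in the preceding lemma for the gadget of \cref{figure:blockers1}. The structural feature I would read off the figure is that $v$ is adjacent to a collection of $x'$ distinct vertices, say $u_1, \ldots, u_{x'}$, and that each $u_i$ is in turn attached to its own copy of the blocker of \cref{figure:blockers1}. The easy branch is disposed of first: if $v$ is colored with a color other than $c_2$, then the first case of the statement already holds and there is nothing to prove. So I would assume from now on that $f(v) = c_2$.

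Under this assumption, the adjacency of each $u_i$ to $v$ immediately forces $f(u_i) \neq c_2$. For each index $i$ I would then apply the preceding lemma with $u_i$ playing the role of its distinguished vertex: since $u_i$ is connected to a copy of the component of \cref{figure:blockers1}, the lemma yields that either $f(u_i) \neq c_1$, or at least $x$ vertices of that copy receive colors from $\mathcal{C} \setminus \{c_1\}$.

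The proof then splits on whether the second possibility is ever triggered. If for some index $i$ the blocker attached to $u_i$ forces $x$ vertices to avoid $c_1$, then the third case of the statement holds verbatim and we are done. Otherwise the first possibility holds for every $i$, that is $f(u_i) \neq c_1$ for all $i$; combined with $f(u_i) \neq c_2$ this gives $f(u_i) \in \mathcal{C} \setminus \{c_1, c_2\}$ for every $i$, so the $x'$ vertices $u_1, \ldots, u_{x'}$ all receive colors outside $\{c_1, c_2\}$, which is precisely the second case. This exhausts the possibilities.

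I expect no genuine difficulty in the logical deduction itself, since it is little more than bookkeeping once the wiring of the gadget is fixed. The one point I would check explicitly against \cref{figure:blockers2} is that $v$ really does have $x'$ \emph{distinct} neighbours, each carrying its own independent copy of the \cref{figure:blockers1} gadget, so that the second case genuinely produces $x'$ distinct penalized vertices and the preceding lemma is applicable to each neighbour as a separate instance. The argument is robust to the exact internal structure of the \cref{figure:blockers1} copies: only the dichotomy they guarantee for their root vertex is used.
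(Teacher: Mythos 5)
Your proof is correct and is essentially the paper's intended argument: the paper gives no detailed proof, stating only that the lemma "follows from straightforward case analyses," and your split (if $f(v)=c_2$, then every neighbour of $v$ avoids $c_2$; either some neighbour is coloured $c_1$, forcing $x$ vertices off $c_1$, or none is, giving $x'$ vertices coloured outside $\{c_1,c_2\}$) is exactly that analysis. One correction to your reading of the figure: in $H_2(x',x)$ the $x'$ neighbours of $v$ share a \emph{single} set of $x$ top vertices, forming a complete bipartite graph with it, rather than each carrying a private copy of $H_1(x)$; as you yourself anticipated, this changes nothing, since each neighbour is still adjacent to all $x$ top vertices, so the dichotomy of the preceding lemma applies to each neighbour verbatim.
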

	
	\begin{lemma}
		Consider any graph $G$, and any set $\mathcal{C}=\{c_1,c_2,\ldots\}$ of at least $3$ colors, and any proper coloring using these colors.
		Assume that there is a vertex $v$ in $G$ connected to the component presented in \textbf{\cref{figure:blockers3}}. 
		Then  at least of the following cases holds:
		\begin{itemize}
			\item $v$ is colored with color other than $c_3$, 
			\item or at least $x''$ vertices are assigned colors from $\mathcal{C} \setminus \{c_1, c_2, c_3\}$, 
			\item or at least $x'$ vertices are colored with colors from $\mathcal{C} \setminus \{c_1, c_2\}$, 
			\item or at least $x$ vertices are assigned colors from $\mathcal{C} \setminus \{c_1\}$.
		\end{itemize}
	\end{lemma}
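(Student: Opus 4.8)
The plan is to reduce to the case where $v$ is coloured $c_3$ and then propagate the two preceding lemmas through the nested structure of the gadget. Write $f$ for the given proper colouring. I read the component of \cref{figure:blockers3} as consisting of $x''$ designated vertices $t_1,\dots,t_{x''}$, each joined to $v$, where every $t_j$ additionally carries a private copy of the component of \cref{figure:blockers1} (with $t_j$ playing the role of the external vertex $v$, so that the first lemma blocks $c_1$ at $t_j$) and a private copy of the component of \cref{figure:blockers2} (again with $t_j$ in the role of $v$, so that the second lemma blocks $c_2$ at $t_j$). These copies are vertex-disjoint across the $t_j$'s, and the whole gadget is bipartite, with $v$ and the $t_j$'s on opposite sides; verifying this is the routine part, done by exhibiting the obvious $2$-colouring layer by layer.

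If $f(v)\neq c_3$ the first bullet holds and there is nothing to prove, so assume $f(v)=c_3$. Since each $t_j$ is adjacent to $v$, we get $f(t_j)\neq c_3$ for every $j$. Now apply the lemma for \cref{figure:blockers1} to the \cref{figure:blockers1}-copy carried by $t_j$, and the lemma for \cref{figure:blockers2} to the \cref{figure:blockers2}-copy carried by $t_j$. For each index $j$ there are then three possibilities: (i) the \cref{figure:blockers1}-copy forces at least $x$ vertices to be coloured from $\mathcal{C}\setminus\{c_1\}$; (ii) the \cref{figure:blockers2}-copy forces at least $x'$ vertices to be coloured from $\mathcal{C}\setminus\{c_1,c_2\}$, or at least $x$ vertices from $\mathcal{C}\setminus\{c_1\}$; (iii) neither copy fires, in which case the escape clauses of the two lemmas give $f(t_j)\neq c_1$ and $f(t_j)\neq c_2$.

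If case (i) or (ii) occurs for even a single index $j$, then — since that copy is vertex-disjoint from the rest, so its forced vertices already constitute the required set on their own — one of the last two bullets of the statement holds and we are done. Otherwise case (iii) holds simultaneously for all $j$, and then each $t_j$ avoids $c_1$, $c_2$ and $c_3$; hence all $x''$ vertices $t_1,\dots,t_{x''}$ receive colours from $\mathcal{C}\setminus\{c_1,c_2,c_3\}$, which is exactly the second bullet. This exhausts the cases, and the argument is the promised straightforward case analysis once the two smaller lemmas are in hand.

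The only genuine subtlety, and the step I would be most careful about, is the logical bookkeeping of the composition: a \emph{single} firing of an inner blocker already produces an entire bullet, so no summation across the $t_j$'s is ever needed, whereas the ``all-escape'' branch (iii) is the one branch that must hold uniformly over all $x''$ indices in order to deliver the $\{c_1,c_2,c_3\}$-avoiding set of size $x''$. Matching the parameters $x,x',x''$ of the inner copies to the quantities named in the statement, and confirming that attaching both inner blockers to every $t_j$ preserves bipartiteness, are then immediate from the two preceding lemmas together with the explicit $2$-colouring of the gadget.
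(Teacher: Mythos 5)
Your argument is, in substance, the same case analysis the paper has in mind (the paper itself only says the lemma ``follows from straightforward case analyses''): reduce to $f(v)=c_3$, observe that every neighbour $t_j$ of $v$ inside the gadget then avoids $c_3$, and invoke the two preceding lemmas so that either some inner blocker fires --- a single firing already yields the third or fourth bullet --- or every $t_j$ also avoids $c_1$ and $c_2$, which is exactly the second bullet. That composition is airtight, and your closing remark that no summation over the $t_j$'s is needed is the right key observation.

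However, you have misread \cref{figure:blockers3}. In $H_3(x'',x',x)$ the vertices $v_1'',\ldots,v_{x''}''$ do \emph{not} each carry private, vertex-disjoint copies of the smaller gadgets: there is a \emph{single shared} independent set $v_1^*,\ldots,v_x^*$ completely joined to the whole row $v_1'',\ldots,v_{x''}''$ (the $H_1$-type blocker), and a single shared pair of rows $v_1,\ldots,v_x$ and $v_1',\ldots,v_{x'}'$, with the $x'$-row completely joined to the whole $x''$-row (the $H_2$-type blocker). The component thus has $2x+x'+x''$ vertices, not $x''(2x+x')+x''$; this size matters elsewhere in the paper, since the inapproximability reduction of \cref{theorem:no_approx} counts $n'\le 54k^2n$ vertices using the shared-blocker gadget. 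Fortunately the misreading does not invalidate your reasoning: each $t_j$ is still the external vertex of a copy of $H_1(x)$ (the shared $v^*$-row) and of a copy of $H_2(x',x)$ (the shared rows $v,v'$), the two preceding lemmas apply verbatim even though the component vertices have additional neighbours, and your proof never actually uses disjointness --- the forced vertices of a firing blocker lie inside one copy either way, and the all-escape branch concerns only the colours of the $t_j$'s themselves. With the gadget description corrected, your proof stands essentially as written.
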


	\begin{figure}[h]
		\centering
		\begin{subfigure}[b]{0.25\linewidth}
		\begin{tikzpicture}
		[place/.style={ellipse,draw=black!100,line width=0.3mm,inner sep=0pt,minimum size=6mm},
			phony/.style={ellipse,draw=black!100,line width=0.3mm, dotted,inner sep=0pt,minimum size=5mm},
		myline/.style={line width=0.3mm}]
		\node at ( 0,0) (v1) [place] {$v_1$};
		\node at ( 0.75,0) (v2) [place] {$\ldots$};
		\node at ( 1.5,0) (v3) [place] {$v_{x}$};
		\node at ( 0.75,-1) (v) [phony] {$v$};
		
		\draw [myline,-] (v1.south) --(v.north);
		\draw [myline,-] (v2.south) --(v.north);
		\draw [myline,-] (v3.south) --(v.north);
		
		\end{tikzpicture}
		\caption
		{
			Component $H_1(x)$.
		}
		\label{figure:blockers1}
		\end{subfigure}
		\hspace{0.025\linewidth}
		\begin{subfigure}[b]{0.25\linewidth}
		\begin{tikzpicture}
		[place/.style={ellipse,draw=black!100,line width=0.3mm,inner sep=0pt,minimum size=6mm},
			phony/.style={ellipse,draw=black!100,line width=0.3mm, dotted,inner sep=0pt,minimum size=5mm},
		myline/.style={line width=0.3mm}]
		\node at ( 0,0) (v11) [place] {$v_1$};
		\node at ( 0.75,0) (v21) [place] {$\ldots$};
		\node at ( 1.5,0) (v31) [place] {$v_{x}$};
		\node at ( 0,-1) (v12) [place] {$v_1'$};
		\node at ( 0.75,-1) (v22) [place] {$\ldots$};
		\node at ( 1.5,-1) (v32) [place] {$v_{x'}'$};
		\node at ( 0.75,-2) (v) [phony] {$v$};
		
		\draw [myline,-] (v11.south) --(v12.north);
		\draw [myline,-] (v11.south) --(v22.north);
		\draw [myline,-] (v11.south) --(v32.north);
		\draw [myline,-] (v21.south) --(v12.north);
		\draw [myline,-] (v21.south) --(v22.north);
		\draw [myline,-] (v21.south) --(v32.north);
		\draw [myline,-] (v31.south) --(v12.north);
		\draw [myline,-] (v31.south) --(v22.north);
		\draw [myline,-] (v31.south) --(v32.north);
		
		\draw [myline,-] (v12.south) --(v.north);
		\draw [myline,-] (v22.south) --(v.north);
		\draw [myline,-] (v32.south) --(v.north);
		\end{tikzpicture}
		\caption
		{
			Component $H_2(x', x)$. 
		}
		\label{figure:blockers2}
		\end{subfigure}
		\hspace{0.025\linewidth}
		\begin{subfigure}[b]{0.4\linewidth}
					\begin{tikzpicture}
			[place/.style={ellipse,draw=black!100,line width=0.3mm,inner sep=0pt,minimum size=6mm},
			phony/.style={ellipse,draw=black!100,line width=0.3mm, dotted,inner sep=0pt,minimum size=5mm},
			myline/.style={line width=0.3mm}]
			\node at ( -1.5,0) (v11) [place] {$v_1$};
			\node at ( -0.75,0) (v21) [place] {$\ldots$};
			\node at ( 0.0,0) (v31) [place] {$v_{x}$};
			\node at ( -1.5,-1) (v12) [place] {$v_1'$};
			\node at ( -0.75,-1) (v22) [place] {$\ldots$};
			\node at ( 0.0,-1) (v32) [place] {$v_{x'}'$};
			\node at ( -0.75,-2) (v13) [place] {$v_1''$};
			\node at ( 0.5,-2) (v23) [place] {$\ldots$};
			\node at ( 1.75,-2) (v33) [place] {$v_{x''}''$};
			
			\node at ( 1,-1) (v11bis) [place] {$v_1^*$};
			\node at ( 1.75,-1) (v21bis) [place] {$\ldots$};
			\node at ( 2.5,-1) (v31bis) [place] {$v_{x}^*$};
			
			\node at ( 0.5,-3) (v) [phony] {$v$};
			
			\draw [myline,-] (v11.south) --(v12.north);
			\draw [myline,-] (v11.south) --(v22.north);
			\draw [myline,-] (v11.south) --(v32.north);
			\draw [myline,-] (v21.south) --(v12.north);
			\draw [myline,-] (v21.south) --(v22.north);
			\draw [myline,-] (v21.south) --(v32.north);
			\draw [myline,-] (v31.south) --(v12.north);
			\draw [myline,-] (v31.south) --(v22.north);
			\draw [myline,-] (v31.south) --(v32.north);
			
			\draw [myline,-] (v11bis.south) --(v13.north);
			\draw [myline,-] (v11bis.south) --(v23.north);
			\draw [myline,-] (v11bis.south) --(v33.north);
			\draw [myline,-] (v21bis.south) --(v13.north);
			\draw [myline,-] (v21bis.south) --(v23.north);
			\draw [myline,-] (v21bis.south) --(v33.north);
			\draw [myline,-] (v31bis.south) --(v13.north);
			\draw [myline,-] (v31bis.south) --(v23.north);
			\draw [myline,-] (v31bis.south) --(v33.north);
			
			\draw [myline,-] (v12.south) --(v13.north);
			\draw [myline,-] (v12.south) --(v23.north);
			\draw [myline,-] (v12.south) --(v33.north);
			\draw [myline,-] (v22.south) --(v13.north);
			\draw [myline,-] (v22.south) --(v23.north);
			\draw [myline,-] (v22.south) --(v33.north);
			\draw [myline,-] (v32.south) --(v13.north);
			\draw [myline,-] (v32.south) --(v23.north);
			\draw [myline,-] (v32.south) --(v33.north);
			
			\draw [myline,-] (v13.south) --(v.north);
			\draw [myline,-] (v23.south) --(v.north);
			\draw [myline,-] (v33.south) --(v.north);
			\end{tikzpicture}
			\caption
			{
				Component $H_3(x'', x', x)$.
			}
			\label{figure:blockers3}
		\end{subfigure}
		\caption
		{
			Components used in \cref{theorem:no_approx}.
		}
		\label{figure:blockerstogether}
	\end{figure}

	\newcounter{countertheoremnoapprox}
	\setcounter{countertheoremnoapprox}{\value{theorem}}
	\begin{theorem}
		For any $c > 0$ and $\epsilon>0$, there is no $c(n^{\frac{1}{2}-\epsilon})$-approximate algorithm for $($$Qm|G = \bipartite, p_j = 1|\cmaxcost$, $m \ge 3$$)$, unless $\PClass = \NPClass$.
		\label{theorem:no_approx}
	\end{theorem}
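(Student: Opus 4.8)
The plan is to reduce from \OnePreExt restricted to bipartite graphs with $k=3$, which is \NPCClass by \cref{theorem:OnePreExt3Bipartite}. Given an instance consisting of a bipartite graph $G$ together with three distinguished vertices $v_1,v_2,v_3$ that must receive colors $c_1,c_2,c_3$, I would build a scheduling instance on three uniform machines, identifying machine $i$ with color $c_i$, all jobs of unit length. The incompatibility graph would be $G$ augmented with copies of the components $H_1,H_2,H_3$ of \cref{figure:blockerstogether}, attached so as to ``pin'' each $v_i$ to its prescribed color: to forbid $v_1$ from taking $c_2$ and $c_3$ I attach $H_2$ and $H_3$ to it, and symmetrically for $v_2$ (attach $H_1,H_3$) and $v_3$ (attach $H_1,H_2$). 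Each component is itself bipartite and can be glued on so as to preserve the bipartiteness of the whole graph, so the resulting incompatibility graph is an admissible input.

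The speeds $s_1>s_2>s_3\ge 1$ and the component parameters $x,x',x''$ would be chosen so that a coloring respecting every pin costs little, while any deviation is heavily penalized. Concretely, machine $1$ (color $c_1$) is fast and machines $2,3$ are slow, and the parameters are scaled so that each alternative branch appearing in the three structural lemmas stated above (on $H_1$, $H_2$, $H_3$) --- ``at least $x$ vertices off $c_1$'', ``at least $x'$ vertices on $c_3$'', and so on --- forces a slow machine to carry a number of unit jobs that is a large polynomial factor above the target makespan. Since only three colors are available, the branch of the $H_3$-lemma that would require vertices colored outside $\{c_1,c_2,c_3\}$ is vacuous, which is exactly what makes the pinning bite.

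Correctness then splits as usual. If the \OnePreExt instance is a yes-instance, I realize the extending $3$-coloring as a schedule, color the attached components in their cheap way, and bound the makespan by a target value $T$, so $\optcmaxcost\le T$. If it is a no-instance, then every proper $3$-coloring of the augmented graph must violate at least one pin, whence one of the three lemmas places it in an expensive branch, forcing a slow machine to an overload and hence $\cmaxcost\ge g\cdot T$ for the intended gap factor $g$. Choosing the parameters so that $g\ge c\,n^{1/2-\epsilon}$ for the total number of jobs $n$ then shows that a $c(n^{1/2-\epsilon})$-approximation would distinguish the two cases and thus decide \OnePreExt, contradicting $\PClass\neq\NPClass$; to handle $m>3$ I would append $m-3$ further machines whose speeds are too small to absorb any penalized load within the budget $gT$.

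I expect the main obstacle to be the simultaneous calibration of the three speeds and the three nested parameters $x,x',x''$. The tension is that enlarging a penalty requires enlarging the components, which increases $n$ and hence the exponent that must be beaten; balancing these forces the instance size to grow roughly as the square of the attainable gap, which is precisely the source of the $1/2$ exponent. The delicate bookkeeping is to verify that \emph{every} expensive branch of \emph{every} lemma --- across all three pins and both the ``off $c_1$'' and ``on $c_3$'' alternatives, counting also the unit jobs contributed by the components themselves --- really does push $\cmaxcost$ above $gT$, while the honest yes-schedule stays at $T$.
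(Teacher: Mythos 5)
Your proposal follows essentially the same route as the paper's proof: a reduction from \OnePreExt on bipartite graphs with $k=3$, pinning $v_1,v_2,v_3$ via the components $H_2,H_3$ / $H_1,H_3$ / $H_1,H_2$ respectively, one fast machine and two slow ones (plus $m-3$ negligibly slow machines for $m>3$), and a gap argument whose parameters are chosen polynomially large so that the quadratic growth of the gadget sizes yields exactly the $n^{1/2-\epsilon}$ threshold. The paper instantiates the calibration you flag as the main obstacle with $H_1(6k^2n)$, $H_2(kn,6k^2n)$, $H_3(1,kn,6k^2n)$, speeds $49k^2,\,5k,\,1,\,\tfrac{1}{kn},\dots$, target makespan $n$ versus $kn$, and $k=\lceil(8c\sqrt{n})^{1/(2\epsilon)}\rceil+1$, confirming that your plan goes through.
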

	\begin{proof}
	The proof is similar to a proof of inapproximability of \CostColoring of \cite{Jansen00ApproximationResults}.
	Let us assume that we have an instance $((V, E), (v_1, v_2, v_3))$ of \OnePreExt on bipartite graph.
	The question is weather this $3$-coloring can be extended to the whole graph.
	We identify the color $c_1$ with $M_1$, color $c_2$ with $M_2$, color $c_3$ with $M_3$.
	Similarly, we interpret $V$ as a set of jobs.
	
	Let us prepare an instance of $Q|G = \bipartite, p_j=1|\cmaxcost$ for a given instance of \OnePreExt.
	Let $k \ge 1$ be an integer value dependent on the instance of \OnePreExt.
	We extend $(V, E)$ by connecting $v_1$ with $H_2(kn, 6k^2n)$ and $H_3(1, kn, 6k^2n)$, $v_2$ with $H_1(6k^2n)$ and $H_3(1, kn, 6k^2n)$ and $v_3$ with $H_1(6k^2n)$ and $H_2(kn, 6k^2n)$, we use $6$ components in total.
	Notice that we have $n' = n + 48k^2n + 4kn + 2 \le 54k^2n$ vertices in the new graph.
	We set $s_1 = 49k^2$, $s_2 = 5k$, $s_3 = 1$, $s_4 = s_5 = \ldots = s_m = \frac{1}{kn}$ for $Q|G = \bipartite, p_j=1|\cmaxcost$.
	
	If the answer to the instance of \OnePreExt is 'YES', then there exists an extension of a $3$-coloring.
	Using such an extension we can prepare schedule such that $M_1$ is given at most $48k^2n + n \le 49k^2n$ vertices, $M_2$ is given at most $4kn + n \le 5kn$ vertices, and $M_3$ is given at most $n$ vertices. 
	This means that there exists a schedule of length at most $\max\{\frac{49k^2n}{49k^2}, \frac{5kn}{5k}, \frac{n}{1}\} = n$.
	
	If the answer to the instance of \OnePreExt is 'NO', then in any schedule $\sch$ at least one of the following cases holds:
	\begin{itemize}
		\item $M_4$ or later machine is used, which means that $\cmaxcost(\sch) \ge kn$.
		\item Some component $H_1$ forces to assign $6k^2n$ jobs to $M \setminus \{M_1\}$, by which $\cmaxcost(\sch) \ge kn$.
		\item Some component $H_2$ forces to assign $kn$ jobs to $M \setminus \{M_1, M_2\}$, by which $\cmaxcost(\sch) \ge kn$, or the component forces the previous case.
		\item Some component $H_3$ forces one of the previous cases.
\end{itemize}
	Hence in any case $\cmaxcost(\sch) \ge kn$.
	
	Assume that we have $(cn^{\frac{1}{2} - \epsilon})$-approximate algorithm.
	Applying it to a prepared instance of the scheduling problem for an 'YES' instance of \OnePreExt has to yield a schedule of $\cmaxcost \le n c(54k^2n)^{\frac{1}{2}-\epsilon} \le 8ck^{1-2\epsilon}n^{3/2}$.
	Notice that it is less than $kn$, if $k > (8c\sqrt{n})^{\frac{1}{2\epsilon}}$.
	Hence we can choose $k = \ceil{(8c\sqrt{n})^{\frac{1}{2\epsilon}}} + 1$, which is clearly polynomially bounded in size of instance of \OnePreExt, for fixed $c$ and $\epsilon$.
	Hence, using the hypothetical algorithm with instances for such a $k$ we would be able to distinguish between 'YES' and 'NO' instances.
	\end{proof}

	In the following we prove that there is $\sqrt{\psum}$-approximate algorithm for $Q|G = \bipartite|\cmaxcost$.
	\begin{algorithm}[H]
	\caption{$\sqrt{\psum}$-approximate algorithm for $Q|G = \bipartite|\cmaxcost$}
	\label{algorithm:SqrtPtotalForQBipartiteCmax}
	\begin{algorithmic}[1]
		\Require A bipartite graph $G=(J, E)$, $m$ uniform machines $M =\{M_1, \ldots, M_m\}$.
		\Ensure A schedule with $\cmaxcost \le \sqrt{\psum}\optcmaxcost$.
		\State If $\psum \le 4$, solve the instance by brute force.
		\State Let $I$ be an independent set of the highest weight containing all jobs of processing requirement at least $\sqrt{\psum}$ in $G$, if such a set exist. 
		\State Let $S_1$ be a schedule produced by \cref{alg:FPTASUnrelated} for $M_1, M_2$ and $\epsilon = 1$.
		\If {$I$ exists}
		\State Let $\optcmaxcostlb$ be the smallest time such that:
		
		- the rounded down capacities of $M_1, \ldots, M_{m}$ are at least $\psum$,
		
		- the rounded down capacities of $M_{2}, \ldots, M_{m}$ are at least $\sum_{\job_j \in J \setminus I} p_j$,
		
		- $M_1$ can process $p_{\max}$.
		\State Round down to the nearest integer the capacities of the machines in time $\optcmaxcostlb$.
		\State Take such $k \ge 3$ that the capacities of $M_2, \ldots, M_k$ are at least $\sum_{\job_j \in J \setminus I} p_j$.
		\State Let $(J_1', J_2')$ be an inequitable coloring of $J \setminus I$.
		\State Take the biggest $k'$ such that the sum of capacities of $M_2, \ldots, M_{k'}$ 
		\Statex \hspace{\algorithmicindent}is at most $\sum_{\job_j \in J_1'}p_j$. If this is not possible take $k' = 2$.
		\State Let $S_2$ be as follows:
		
		schedule $J_1'$ on $M_2, \ldots, M_{k'}$ and $J_2'$ on $M_{k'+1}, \ldots, M_{k}$;
		
		schedule $I$ on $M_1, M_{k+1}, \ldots, M_{m}$.
		\EndIf
		\State \Return The best of $S_1$, and $S_2$, if the latter exists. 
	\end{algorithmic}
	\end{algorithm}

	\begin{theorem}
		\cref{algorithm:SqrtPtotalForQBipartiteCmax} is $\sqrt{\psum}$-approximate for $Q|G = \bipartite|\cmaxcost$.
		\label{theorem:Qapproximation}
	\end{theorem}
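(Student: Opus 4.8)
The plan is to show that the returned schedule $S$ satisfies $\cmaxcost(S)\le\sqrt{\psum}\,\optcmaxcost$; write $P=\psum$ for brevity and call a job \emph{big} if its processing requirement is at least $\sqrt P$, so that there are at most $\sqrt P$ big jobs and, whenever $I$ exists, every job of $J\setminus I$ is \emph{small} (requirement $<\sqrt P$). After settling the base case $P\le 4$ by brute force, I would first record the lower bounds that govern everything: in any schedule of makespan $t$ the rounded-down capacities must absorb all the work and the heaviest job, i.e. $\sum_i\floor{s_i t}\ge P$ and $\floor{s_1 t}\ge p_{\max}$; hence $\optcmaxcost\ge p_{\max}/s_1$ and $\optcmaxcost\ge P/\sum_i s_i$. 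In particular, if a big job exists then $\optcmaxcost\ge\sqrt P/s_1$, so $s_1\,\optcmaxcost\ge\sqrt P$ and $w(I)/s_1\le P/s_1\le\sqrt P\,\optcmaxcost$, which is exactly the budget needed to run all of $I$ on the fastest machine.

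Next I would split on whether $I$ exists. If $I$ does not exist, the big jobs are not independent, so in the bipartite graph two big jobs lie in different classes and are adjacent; in any schedule they occupy two distinct machines, each of which must, within time $\optcmaxcost$, process a job of weight $\ge\sqrt P$, forcing $s_1,s_2\ge\sqrt P/\optcmaxcost$. Placing the two color classes of $G$ on $M_1,M_2$ then yields a feasible two-machine schedule of makespan at most $P/s_2\le\sqrt P\,\optcmaxcost$; since $S_1$ is a near-optimal two-machine schedule and $\sqrt P>2$ for $P>4$, the approximation factor of the FPTAS for $R2|G=\bipartite|\cmaxcost$ is absorbed and $\cmaxcost(S_1)\le\sqrt P\,\optcmaxcost$ (where the bipartition is forced and balancing is impossible, the FPTAS returns the essentially unique coloring, so no factor is actually lost).

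If $I$ exists I would analyse $S_2$ against $\optcmaxcostlb$. The first and third defining conditions of $\optcmaxcostlb$ hold at $t=\optcmaxcost$, so they contribute thresholds $\le\optcmaxcost$; the delicate one is that $M_2,\dots,M_m$ must cover $J\setminus I$. Here I would argue by a dichotomy on the optimum: if some big job is scheduled off $M_1$ then $\optcmaxcost\ge\sqrt P/s_2$, which already supplies a $\sqrt P$ factor of slack, and otherwise the work forced onto $M_2,\dots,M_m$ is essentially $J\setminus I$, so that this threshold is comparable to $\optcmaxcost$; either way $\optcmaxcostlb\le\sqrt P\,\optcmaxcost$. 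For the construction, the small jobs $J\setminus I$ are split by the inequitable coloring into two independent classes and packed greedily onto $M_2,\dots,M_k$; because each such job has weight $<\sqrt P$, every machine exceeds its time-$\optcmaxcostlb$ capacity by less than one small job, so these machines finish by $\optcmaxcostlb+\sqrt P$ (using $s_i\ge 1$). The set $I$ is run on $M_1$ together with $M_{k+1},\dots,M_m$: the few big jobs fit on $M_1$ within $w(I)/s_1\le\sqrt P\,\optcmaxcost$ by the bound above, while the remaining small part of $I$ again overflows by $<\sqrt P$. Combining the two groups, and absorbing the additive $\sqrt P$ through a slightly sharper form of the lower-bound lemma (roughly $\optcmaxcostlb\le\sqrt P(\optcmaxcost-1)+1$) together with integrality of the $p_j$, gives $\cmaxcost(S_2)\le\sqrt P\,\optcmaxcost$.

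The main obstacle is the lower-bound lemma $\optcmaxcostlb\le\sqrt P\,\optcmaxcost$, and in particular its covering condition for $M_2,\dots,M_m$: as a very fast $M_1$ paired with a heavy independent set $I$ shows, this quantity is genuinely \emph{not} bounded by $\optcmaxcost$ alone, so the proof must extract the extra $\sqrt P$ factor exactly from the big-job dichotomy rather than from a naive capacity count. The secondary difficulty is the quantitative bookkeeping guaranteeing that the greedy overflow of at most one small job per machine, added to $\optcmaxcostlb$, still stays within $\sqrt P\,\optcmaxcost$ and not $\sqrt P\,\optcmaxcost+\sqrt P$; this is precisely where the base case $P\le 4$ and the integrality of the processing requirements are used.
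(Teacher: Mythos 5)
You correctly reproduce the paper's main ingredients (the role of the independent set $I$, the bounds $\optcmaxcost \ge p_{\max}/s_1$ and $\optcmaxcost \ge \psum/\sum_i s_i$, and the dichotomy on where the big jobs sit in $\sopt$), and you even spot something the paper itself glosses over: the covering condition for $M_2,\ldots,M_m$ makes $\optcmaxcostlb$ \emph{not} a lower bound on $\optcmaxcost$ in general. But the repair you build on that observation has a genuine gap, and it is exactly where your route diverges from the paper's. You keep charging to $S_2$ in the branch where some big job is scheduled off $M_1$ in the optimum, armed only with $\optcmaxcostlb \le \sqrt{\psum}\,\optcmaxcost$. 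In that branch there is no slack left: even your own claimed construction bound $\cmaxcost(S_2) \le \optcmaxcostlb + \sqrt{\psum}$ combines to $\sqrt{\psum}\,\optcmaxcost + \sqrt{\psum}$, which overshoots the target, and the ``sharper lemma'' $\optcmaxcostlb \le \sqrt{\psum}(\optcmaxcost - 1) + 1$ that you invoke to absorb the excess is neither proved nor true (take $s_2$ large and $\optcmaxcost$ close to $\sqrt{\psum}/s_2$, with $\sum_{\job_j \in J\setminus I}p_j$ close to $\psum$, as in a star whose center is the unique big job: the rounded-down capacity of $M_2,\ldots,M_m$ at that time falls far short of $\sum_{\job_j \in J\setminus I}p_j$). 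The paper avoids this trap structurally: whenever a big job sits on $M_{i\ge 2}$ in $\sopt$, or two big jobs conflict, it charges to $S_1$ (there $\optcmaxcost \ge \sqrt{\psum}/s_2$, so the two fastest machines alone suffice); $S_2$ and $\optcmaxcostlb$ are used only when all big jobs go to $M_1$ in $\sopt$ or no big job exists, which are precisely the cases where $\optcmaxcostlb \le \optcmaxcost$ genuinely holds, leaving room for the constant-factor overheads of the packing.

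The second gap is in the packing itself. Your claim that ``every machine exceeds its time-$\optcmaxcostlb$ capacity by less than one small job'' presupposes that each designated machine group has enough total capacity for its class, and \cref{algorithm:SqrtPtotalForQBipartiteCmax} does not guarantee that. The choice of $k'$ only ensures that the capacities of $M_2,\ldots,M_{k'}$ are at least \emph{half} of $\sum_{\job_j\in J_1'}p_j$ (this is why the paper runs its doubling-with-margin argument, paying a multiplicative constant rather than your additive $\sqrt{\psum}$), and in the fallback $k'=2$ the machines $M_3,\ldots,M_k$ that receive $J_2'$ have no capacity guarantee at all. The paper closes that hole with a case you never generate: either $1/s_3 \ge \sqrt{\psum}/s_2$, in which case $M_3$ is busy in $\sopt$ and $S_1$ is already good, or else $s_3 > s_2/\sqrt{\psum}$, in which case $M_3$ alone covers $J_2'$ within time $\sqrt{\psum}\,\optcmaxcostlb$ because $\sum_{\job_j\in J_2'}p_j \le \sum_{\job_j\in J_1'}p_j$. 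Relatedly, when no big job exists your ``run all of $I$ on $M_1$'' budget $\psum/s_1 \le \sqrt{\psum}\,\optcmaxcost$ is unavailable (it needs $p_{\max} \ge \sqrt{\psum}$), and placing the remainder of $I$ on $M_1, M_{k+1},\ldots,M_m$ requires the paper's accounting that the capacity wasted on $M_2,\ldots,M_k$ is at most three times $M_1$'s capacity, so that $M_1$ working $4\optcmaxcostlb$ compensates; your sketch omits this entirely.
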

	\begin{proof}
		Mind that the processing requirements of jobs are positive integers.
		Observe that $\optcmaxcostlb$ is a lower bound on $\optcmaxcost$ and that it can be efficiently calculated. 
		Notice, that it might be the case that $\optcmaxcost > \optcmaxcostlb$.
		Let $\sopt$ be any optimal schedule.
		\begin{itemize}
			\item Assume that $M_3, \ldots, M_m$ are not doing any job in $\sopt$. 
			Clearly, $S_1$ is at most $2$-approximate.
			Hence assume further that the machine $M_3$ is doing at least one job.
			
			\item Assume that $1/s_3 \ge \sqrt{\psum}/s_2$.
			In this case $S_1$ is clearly $\sqrt{\psum}$-approximate, by the previous assumption.
			Hence assume further that $1/s_3 < \sqrt{\psum}/s_2$.
			
			\item Assume that there are no jobs of processing requirement greater than or equal to $\sqrt{\psum}$.
			Hence, set $I$ exist.
			Moreover, due to the guessing, the capacities of $M_2, \ldots, M_{m}$ can cover $J \setminus I$.
			\begin{itemize}
				\item Assume that $k' = 2$, i.e. the capacity of $M_2$ covers $J_1'$ completely. 
				In this case $M_3$ covers $J_2'$ in time $\sqrt{\psum}\optcmaxcostlb$, due to the fact that $\sum_{\job_j \in J_1'} p_j \ge \sum_{\job_j \in J_2'} p_j$ ( by the definition of the inequitable coloring) and due to the bound on $s_2$ to $s_3$ ratio.
				\item Otherwise capacities of $M_2, \ldots, M_{k'}$ cover $J_1'$ in at least $\frac{1}{2}$ and the capacities of $M_{k'+1}, \ldots, M_{k}$ cover $J_2'$ completely.
				Consider two cases:
				\begin{itemize}
					\item At least one capacity of $M_2, \ldots, M_{k'}$ is of size $1$.
					In this case the capacity of $M_2, \ldots, M_{k'}$ is at least $|J_1'|$ and capacity of $M_{k'+1}, \ldots, M_k$ is at least $|J_2'|$, by the fact that the last machines added were of capacity $1$ we were able to cover the jobs exactly.
					Notice that if we extend the completion time by $\sqrt{\psum}$ then in each time slot of size $\sqrt{\psum}$ we can schedule at least one job - hence we can schedule all of them.
					\item All the capacities of $M_2, \ldots, M_{k'}$ are greater than or equal to $2$.
					If the capacity of $M_i$ was $cap_i$ consider a hypothetical capacity $cap_i' = 2cap_i + \sqrt{\psum}$. 
					To achieve such a capacity $M_i$ has to work $\frac{cap_i'}{cap_i} \le \sqrt{\psum}$ times longer, by the assumption that $\psum \ge 4$.
					The new processing times of the machines were increased to at most $\sqrt{\psum}\optcmaxcostlb$ time.
					We say that the capacities were doubled, and there was attached a margin to each capacity.
					Notice that by the fact that $\sum_{i=2}^{k'}cap_i$ were at least $\frac{1}{2}\sum_{\job_j \in J_1'} p_j$, the doubled capacities together with margins allow to schedule all jobs of $J_2'$ by simple list scheduling.
					For $J_2'$ and $M_{k'+1}, \ldots, M_{k}$ the reasoning from previous point holds.
				\end{itemize}
			\end{itemize}
			Now we have to schedule $I$ on $M_1, M_{k+1}, \ldots, M_m$, again by a simple list scheduling.
			Notice that the capacity potentially wasted (perhaps the machines $M_2, \ldots, M_k$ are underutilized) is at most three times the capacity of $M_1$.
			Hence in time $4\optcmaxcostlb$ $M_1$ can schedule such many jobs that the capacity on $M_{k+1}, \ldots, M_m$ is at least the necessary space for the  remaining jobs.
			Clearly, we can schedule them on $M_{k+1}, \ldots, M_{m}$ in at most $\sqrt{\psum}\optcmaxcostlb$ time.
			\item Assume that there are some jobs of processing requirement greater than or equal to  $\sqrt{\psum}$, and there is no single independent set containing them all.
			In this case $S_1$ is $\sqrt{\psum}$-approximate.
			\item Assume that there are some jobs of processing requirement greater than or equal to  $\sqrt{\psum}$ and there is a single independent set containing all of them.
			\begin{itemize}
				\item Assume that $M_{i \ge 2}$ in $\sopt$ does at least one job of processing requirement $\sqrt{\psum}$ - then $S_1$ is $\sqrt{\psum}$-approximate.
				\item Otherwise, notice that all the jobs have to be schedule on $M_1$.
				Hence, by the previous argument the capacity of $M_2, \ldots, M_m$ is sufficient to schedule $J \setminus I$ in time $\sqrt{\psum}\optcmaxcostlb$.
				And similarly, the space wasted on $M_2, \ldots, M_k$ is at most $3$ times the capacity of $M_1$ - hence in time $4\optcmaxcostlb$ $M_1$ can do more than its proper share.
				Hence the machines $M_{k+1}, \ldots, M_m$ can handle the other jobs from $I$ in time less than or equal to $\sqrt{\psum}\optcmaxcostlb$, by the previously presented reasoning.
			\end{itemize}
		\end{itemize}
	\end{proof}
	In passing notice that the algorithm is in a sense optimal.
	An $\Osymbol((\psum)^{1/2 - \epsilon})$-approximate algorithm has to be $\Osymbol(n^{1/2-\epsilon})$-approximate for $Q|G = \bipartite, p_j=1|\cmaxcost$, which is impossible, even for fixed number of $m \ge 3$ machines, unless $\PClass = \NPClass$.

	\newcounter{counterqapxcomplex}
	\setcounter{counterqapxcomplex}{\value{theorem}}
	
	\begin{lemma}
		\cref{algorithm:SqrtPtotalForQBipartiteCmax} has time complexity $\Osymbol(|J|^2 + |J||E| + |M|\log |M|)$.
		\label{lemma:qapxcomplexity}
	\end{lemma}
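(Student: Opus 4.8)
The plan is to walk through \cref{algorithm:SqrtPtotalForQBipartiteCmax} line by line, charge each instruction a running time, and observe that the claimed bound is the sum of these charges. The cheap preprocessing is immediate: reading the input, forming $\psum$, $p_{\max}$ and the threshold $\sqrt{\psum}$, and identifying the \emph{heavy} jobs (those with $p_j \ge \sqrt{\psum}$, of which there are at most $\sqrt{\psum}$) all take $\Osymbol(|J|)$ time, and the brute-force branch for $\psum \le 4$ touches at most four jobs and hence costs $\Osymbol(|J| + |M|)$.

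The two genuinely expensive steps are the construction of the maximum-weight independent set $I$ and the call to the FPTAS of \cref{alg:FPTASUnrelated}; I claim these together account for the $\Osymbol(|J|^2 + |J||E|)$ part of the bound. For $I$ I would use the standard reduction of maximum-weight independent set in a bipartite graph to a minimum-weight vertex cover, hence to a minimum $s$--$t$ cut on a network with $\Osymbol(|J|)$ nodes and $\Osymbol(|J| + |E|)$ arcs; the requirement that $I$ contain every heavy job is enforced by giving those jobs prohibitively large weight (equivalently, by deleting them together with their neighbourhoods and reinserting them), and ``$I$ does not exist'' is reported exactly when two heavy jobs are adjacent. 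Running the FPTAS with the fixed parameter $\epsilon = 1$ on $M_1, M_2$ removes the $1/\epsilon$ dependence, leaving a time polynomial in $|J|$ that is subsumed by this term via the complexity established for \cref{alg:FPTASUnrelated}.

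Everything inside the ``$I$ exists'' branch is cheaper and, crucially, supplies the $\Osymbol(|M|\log|M|)$ term through the computation of $\optcmaxcostlb$. Since each of its three defining conditions asks for the least time $t$ at which a floor-rounded capacity sum $\sum_{i \in S}\lfloor t s_i\rfloor$ of a fixed machine subset reaches an integer target $T$, I would start from the continuous solution $t_0 = T/\sum_{i\in S}s_i$, where the rounding deficit is at most $|M|$, and then recover it by crossing the next $\Osymbol(|M|)$ capacity breakpoints $c/s_i$ in increasing order with a priority queue, at cost $\Osymbol(|M|\log|M|)$; the answer is the maximum over the three conditions. Rounding the capacities and the linear scans fixing $k$ and $k'$ cost $\Osymbol(|M|)$, the inequitable colouring $(J_1',J_2')$ of $J \setminus I$ is one traversal of $G$ in $\Osymbol(|J|+|E|)$ time, and assembling $S_2$ by list scheduling and returning the better of $S_1, S_2$ cost $\Osymbol(|J|+|M|)$. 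Summing all charges gives $\Osymbol(|J|^2 + |J||E| + |M|\log|M|)$.

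The step I expect to be the main obstacle is pinning the maximum-weight independent set computation to exactly $\Osymbol(|J|^2 + |J||E|)$: a black-box maximum-flow routine on general integer capacities is a priori too slow, so one must either invoke a flow algorithm that exploits the bipartite structure of the cut network (with $\Osymbol(|J|)$ phases each costing $\Osymbol(|J| + |E|)$) or a dedicated combinatorial algorithm for weighted bipartite independent sets, and then verify that forcing the heavy jobs into $I$ and detecting infeasibility add no overhead. A secondary subtlety is obtaining the exact rational value of $\optcmaxcostlb$ within the $\Osymbol(|M|\log|M|)$ budget rather than by naively scanning all capacity breakpoints.
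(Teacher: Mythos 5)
Your proposal follows essentially the same route as the paper's proof: the same itemized charging of steps, the same reduction of the maximum-weight independent set computation to a minimum $s$--$t$ cut in a flow network built on the bipartite graph (with the heavy jobs handled by deleting their closed neighbourhoods and reinserting them, exactly as the paper does), and the same computation of $\optcmaxcostlb$ via the relaxed fractional solution followed by a heap/priority queue over the $\Osymbol(|M|)$ capacity breakpoints. The one place where you stop short is precisely the step you flag as the ``main obstacle,'' and here your worry is unfounded in a way worth knowing: a black-box maximum-flow routine is \emph{not} a priori too slow, because the strongly polynomial algorithm of Orlin (cited in the paper as \cite{OrlinMaxFlowsInMN13}) computes a maximum flow in $\Osymbol(nm)$ time independently of the capacities, i.e.\ in $\Osymbol(|J||E|)$ here, after which the cut is recovered in $\Osymbol(|J|+|E|)$ time. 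This single citation closes your gap; by contrast, your proposed fallback of a flow algorithm with $\Osymbol(|J|)$ phases each costing $\Osymbol(|J|+|E|)$ is not justified for networks with general (vertex-weight) capacities --- a blocking-flow phase on a weighted network is not linear-time without further machinery --- so you should not lean on it. A small additional difference: the paper obtains the $|J|^2$ term from the explicit pairwise check that the heavy jobs form an independent set, whereas you fold that into preprocessing; and the paper's proof does not bother to account for the call constructing $S_1$, which you correctly note is $\Osymbol(n)$ for $\epsilon=1$ and hence subsumed.
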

	\noindent For the discussion about the complexity see the appendix.

	\subsection{Random Bipartite Graphs}
	\label{section:random}
	By the presented observations, the algorithms with good worst-case approximation ratio are very unlikely to exist.
	However, for a particular type of input data the problem might be easier.
	We study the behavior of the further proposed algorithm on instances where input graph $(J, E)$ is constructed according to a random bipartite graph model.
	In accordance with \cite{JansonLR2000RandomGraphs} we define the random graph $\gnnp$ as probability space $(\Omega, \mathcal{F}, \mathbb{P})$ where:
	$\Omega$ is a set of all spanning graphs of $K_{n,n}$, where by graph $G$ spanning graph $K_{n,n}$ we mean a graph $G$ such that $V(G) = V(K_{n,n})$ and $E(G) \subseteq E(K_{n,n})$,
	$\mathcal{F}$ is the set of all subsets of $\Omega$,
	$\mathbb{P}(G) = p(n)^{|E(G)|}(1-p(n))^{\binom{|V(G)|}{2} - |E(G)|}$.
	In this paper we consider functions $p(n)$ that are monotonic and are such that $0 \le p(n) \le 1$ for each $n$.
	We consider three types of functions $p(n)$: $p(n) = \osymbol(\frac{1}{n})$; $p(n) = \frac{a}{n}$, for some value $a$; and $p(n) = \omegasymbol(\frac{1}{n})$.
	We prove that the schedule constructed by the algorithm that we propose a.a.s has constant approximation ratio when applied to a realization of $\gnnp$.

	Many authors considered properties of graphs generated according to Gilbert's model, in particular the size of maximum matching $\mu(G)$ in such a graph.
	This parameter is of critical importance to bound the approximation ratio of the further proposed algorithm.
	Sadly, to the authors best knowledge, even the exact values of expectation of $\mu(G)$ are unknown for functions of form $p(n) = \frac{c}{n}$ for a constant $c$.
	However, before we apply the observations about $\mu(G)$ let us provide some simple observations about properties of inequitable coloring.
	\begin{corollary}
		Let $p(n) = \osymbol(\frac{1}{n})$, and let $(V_1', V_2')$ be an inequitable coloring of $\gnnp$, then a.a.s. $\frac{|V_2'|}{n} = \osymbol(1)$. 
		\label{corollary:concentrationlowp}
	\end{corollary}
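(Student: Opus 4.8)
The plan is to reduce the probabilistic statement to a purely combinatorial bound relating $|V_2'|$ to the number of edges of the realization, and then to control the edge count in the very sparse regime $p(n) = \osymbol(1/n)$ by a first-moment estimate. So first I would understand the structure of an inequitable coloring. Since any realization of $\gnnp$ is bipartite, every connected component $C$ has a bipartition $(X_C, Y_C)$ that is unique up to swapping the two sides, and a proper $2$-coloring of $G$ amounts to choosing, independently for each component, which side receives color $1$. To maximize $|V_1'|$, that is, to minimize $|V_2'|$, the inequitable coloring simply places the larger side of each component into $V_1'$; this componentwise optimum is globally optimal because components are disjoint. Hence
\[
|V_2'| = \sum_{C}\min\{|X_C|,|Y_C|\},
\]
where the sum ranges over the connected components of $G$, and in particular isolated vertices contribute $0$.

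Next I would bound each summand by the number of edges inside the corresponding component. If $C$ has sides of sizes $a \le b$, then being connected it contains at least $a + b - 1$ edges, and $a + b - 1 \ge a$ whenever $a \ge 1$. Summing over all components yields the deterministic inequality $|V_2'| \le |E(G)|$, so it suffices to show that $|E(G)|$ is $\osymbol(n)$ a.a.s.

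Finally I would apply the first moment. The expected number of edges is $\mathbb{E}[|E(G)|] = n^2 p(n)$, and since $p(n) = \osymbol(1/n)$ we have $n p(n) = \osymbol(1)$, whence $n^2 p(n) = \osymbol(n)$. Choosing the explicit threshold $\epsilon_n = \sqrt{n p(n)} = \osymbol(1)$ and using $|V_2'| \le |E(G)|$ together with Markov's inequality gives
\[
\mathbb{P}\!\left(\frac{|V_2'|}{n} \ge \epsilon_n\right) \le \mathbb{P}\bigl(|E(G)| \ge \epsilon_n n\bigr) \le \frac{n^2 p(n)}{\epsilon_n n} = \sqrt{n p(n)} \longrightarrow 0,
\]
so that a.a.s. $|V_2'|/n \le \sqrt{n p(n)} = \osymbol(1)$, as required.

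Essentially all of the content lies in the first two steps; the probabilistic part is an immediate first-moment calculation once the deterministic bound $|V_2'| \le |E(G)|$ is in hand. The only place that demands care is the componentwise characterization of the inequitable coloring and the per-component edge bound (one must not forget that isolated vertices contribute nothing, and one must verify $a + b - 1 \ge a$ only requires $a \ge 1$). A secondary, very minor obstacle is phrasing the conclusion as a genuine $\osymbol(1)$ statement uniform in the realization, which the explicit choice $\epsilon_n = \sqrt{n p(n)}$ settles cleanly.
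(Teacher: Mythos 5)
Your proposal is correct and takes essentially the same approach as the paper: both arguments reduce the claim to the deterministic bound $|V_2'| \le |E(G)|$ (the paper gets it by counting the non-isolated vertices of $V_2$, you by the componentwise-minimum characterization of the inequitable coloring) and then show the edge count is $\osymbol(n)$ a.a.s.\ from its expectation $n^2 p(n) = \osymbol(n)$. The only difference is cosmetic: you use Markov's inequality with the explicit threshold $\epsilon_n = \sqrt{np(n)}$ where the paper uses Chebyshev's inequality, so your probabilistic step is if anything slightly more elementary.
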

	\begin{proof}
		We construct the following upper bound on the size of the smaller part. We take all non-isolated vertices in $V_2$.
		Let us construct the bounds on the number of non-isolated vertices in $V_2$.
		Such a number is certainly at mosts the number of edges in this graph.
		Let the number of edges be $X$, it has binomial distribution with parameters
		\[
		E[X] = n^2p(n),
		\]
		\[
		Var[X] = n^2p(n)(1-p(n)).
		\]

		By Chebyshev's bound we have
		\[
		P((|X - E[X]|) \ge \sqrt{n}) \le np(n)(1 - p(n)) = \osymbol(1).
		\] 
		By the fact that $E[X] = \osymbol(n)$ this means also that a.a.s. the number of edges is $\osymbol(n)$.
		It means that a.a.s. the number of vertices in $V_2'$ is $\osymbol(n)$.
	\end{proof}

	\begin{lemma}
		Let $p(n) = \frac{a}{n}$, for some $a > 0$, and let $(V_1', V_2')$ be an inequitable coloring of $\gnnp$, then a.a.s. $|V_2'| \le  n(1 - (1 + \frac{a}{n})^n) + \osymbol(n)$.
	\end{lemma}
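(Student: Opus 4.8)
The plan is to bound $|V_2'|$ from above by the number of non-isolated vertices lying in one fixed side of the bipartition, and then to control that number by a concentration argument, exactly as in \cref{corollary:concentrationlowp} but now keeping track of the constant rather than merely showing it is $\osymbol(n)$.

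Let $V_1, V_2$ denote the two sides of $\gnnp$, each of size $n$. First I would exhibit one explicit proper $2$-coloring whose smaller class is easy to analyse: place into the first class all of $V_1$ together with every isolated vertex of $V_2$, and into the second class all the non-isolated vertices of $V_2$. This coloring is proper, since $V_1$ is independent and the adjoined isolated vertices carry no edges, while the second class is a subset of $V_2$. Writing $Y$ for the number of non-isolated vertices of $V_2$, the smaller class of this coloring has size exactly $Y$. Since the inequitable coloring $(V_1', V_2')$ maximises $|V_1'|$ over all proper $2$-colorings and the total number of vertices is fixed, this yields the deterministic bound $|V_2'| \le Y$, valid for every realisation of $\gnnp$.

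Then I would determine the distribution of $Y$. A vertex $v \in V_2$ is isolated exactly when none of its $n$ potential edges to $V_1$ is present, which happens with probability $(1 - \frac{a}{n})^n$; for distinct $v, v' \in V_2$ these events depend on disjoint sets of potential edges and are therefore independent. Hence $Y$ has a binomial distribution with parameters $n$ and $q = 1 - (1 - \frac{a}{n})^n$, so that $E[Y] = n\bigl(1 - (1 - \frac{a}{n})^n\bigr)$ and $Var[Y] = nq(1-q) \le n/4$. A Chebyshev estimate then closes the argument:
\[
P\bigl(|Y - E[Y]| \ge n^{2/3}\bigr) \le \frac{Var[Y]}{n^{4/3}} \le \frac{1}{4}n^{-1/3} = \osymbol(1),
\]
so a.a.s. $Y \le E[Y] + \osymbol(n)$, and combining this with $|V_2'| \le Y$ gives the stated bound (with the isolation probability appearing as $(1 - \frac{a}{n})^n$, which is what the count of non-isolated vertices forces).

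I expect the only delicate point to be the structural step $|V_2'| \le Y$: one must verify that the exhibited coloring is genuinely proper and that the maximality of $|V_1'|$ in the inequitable coloring transfers the bound to $V_2'$. The probabilistic part is routine, the events ``$v$ is isolated'' being mutually independent across $v \in V_2$, so any Chebyshev or sub-Gaussian concentration with an $\osymbol(n)$ window suffices.
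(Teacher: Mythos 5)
Your proof is correct and follows essentially the same route as the paper: bound $|V_2'|$ by the number of non-isolated vertices of $V_2$ (the paper phrases this as taking all of $V_2$ except its isolated vertices), observe that this count is binomial because the isolation events of distinct vertices of $V_2$ depend on disjoint sets of potential edges, and apply Chebyshev's inequality with an $\osymbol(n)$ deviation window. Note that both your argument and the paper's own proof produce the bound with $(1-\frac{a}{n})^n$, so the $(1+\frac{a}{n})^n$ appearing in the lemma statement is evidently a sign typo, which your closing remark correctly flags.
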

	\begin{proof}
		Let the vertices be denoted as $\{v_1, \ldots, v_{2n}\}$, where first $n$ vertices belongs to first part, and the rest to second.
		To construct a reasonable upper bound on the size of the smaller part we provide a simple heuristic.
		Precisely, for a realization $G = (V_1 \cup V_2; E)$ we take all vertices of $V_2$, except the isolated vertices.
		This simple approach allow us to exploit the fact that events of being independent in $V_2$ are independent.
		Hence we have variables $X_{n+1}, \ldots, X_{2n}$. 
		The variable $X_i$ takes value $1$ if $v_i \in V_2$ is isolated, $0$ otherwise.
		We also construct a variable $X = X_{n+1} + \ldots X_{2n}$.
		Hence $X_i$ has Bernoulli distribution and $X$ has binomial distribution.
		Hence 
		\[
		E[X] = n(1-\frac{a}{n})^n
		\]
		\[
		Var[X] = n(1-\frac{a}{n})^n(1-(1-\frac{a}{n})^n).
		\]
		By Chebyshev's inequality we ge at once that
		\[
		Pr(|X - E[X]| \ge \sqrt{E[X]\log E[X]}) \le \frac{Var[X]}{E[X] \log E[X]} = \frac{1 - (1 - \frac{a}{n})^n}{\log n(1 - \frac{a}{n})^n}.
		\]
		
		The limit of the rightmost function at infinity is $0$. 
		Hence it means that a.a.s. the number of isolated vertices in $V_2$ is at least 
		\[
		n(1 - \frac{a}{n})^n - \osymbol(n).
		\]
		This means that $|V_2'|$ is a.a.s. at most
		\[
		n(1 - (1 - \frac{a}{n})^n) + \osymbol(n).
		\]
	\end{proof}

	The most interesting probability functions are of the form $p(n) = \frac{a}{n}$, for some constant $a$.
	For these we have the following.
	\begin{lemma}[\cite{MastinJGreedyOnline2013}]
		Let $p(n) = \frac{a}{n}$, for some $a > 0$, then a.a.s. $\mu(\gnnp) \ge (1-e^{(e^{-a}-1)})n$.
	\end{lemma}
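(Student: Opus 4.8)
The plan is to exhibit an explicit feasible matching whose size is a.a.s.\ at least $(1 - e^{e^{-a}-1})n - \osymbol(n)$; since any feasible matching lower-bounds $\mu(\gnnp)$, this establishes the claim. The matching I would analyse is the \emph{oblivious} one. Writing $V_1, V_2$ for the two parts and processing the vertices of $V_1$ in a fixed order, each $u \in V_1$ that has at least one neighbour selects a neighbour $Z_u \in V_2$ uniformly at random, and $u$ is matched to $Z_u$ provided $Z_u$ has not already been taken by an earlier vertex. The produced edge set is clearly a matching, and a vertex $w \in V_2$ is covered exactly when $w = Z_u$ for some $u$; hence the matching size equals the number of distinct values among $Z_{u_1}, \dots, Z_{u_n}$ (discarding the ``no neighbour'' outcome).

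First I would compute, for a fixed $w \in V_2$ and a fixed $u \in V_1$, the probability that $u$ selects $w$. Conditioning on $w \in N(u)$, an event of probability $p$, the degree of $u$ is $1 + \mathrm{Bin}(n-1,p)$, so using the identity $E[1/(1+\mathrm{Bin}(n-1,p))] = (1-(1-p)^{n})/(np)$ one obtains $\Pr[Z_u = w] = (1-(1-p)^n)/n$, which for $p = a/n$ tends to $(1-e^{-a})/n$. The crucial structural point is that $Z_{u_1}, \dots, Z_{u_n}$ are mutually independent, since each depends only on the edges incident to its own vertex together with an independent uniform choice. Consequently $\Pr[w \text{ uncovered}] = \prod_{u}(1 - \Pr[Z_u = w])$ converges to $e^{-(1-e^{-a})} = e^{e^{-a}-1}$, so the expected number of covered vertices of $V_2$ is $(1 - e^{e^{-a}-1})n(1 + \osymbol(1))$.

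The remaining step is concentration. Since the matching size $Y$ is the number of distinct values taken by the independent variables $Z_{u_1}, \dots, Z_{u_n}$, altering a single $Z_u$ changes $Y$ by at most $1$. McDiarmid's bounded-difference inequality then yields $\Pr[|Y - E[Y]| \ge t] \le 2\exp(-t^2/(2n))$, and choosing $t = n^{2/3}$ shows $Y = E[Y] - \osymbol(n)$ a.a.s. Combining this with the expectation computation gives $\mu(\gnnp) \ge Y \ge (1 - e^{e^{-a}-1})n - \osymbol(n)$ a.a.s., as required.

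I would expect the principal obstacles to be twofold. First, the careful evaluation of $\Pr[Z_u = w]$ through the harmonic-moment identity for the binomial, together with controlling the passages $(1-p)^n \to e^{-a}$ and product-to-exponential uniformly in $n$, so that the $\osymbol(1)$ terms are genuinely negligible in aggregate. Second, justifying that the oblivious procedure really outputs a valid matching whose covered-$V_2$ count both equals the number of distinct selections and is $1$-Lipschitz in the $Z_u$, which is exactly what makes the independence and the bounded-difference concentration apply cleanly.
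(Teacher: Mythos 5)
This lemma is not proved in the paper at all: it is quoted from Mastin and Jaillet \cite{MastinJGreedyOnline2013}, and the constant $1-e^{e^{-a}-1}$ is precisely the asymptotic fraction of $V_2$ matched by the \emph{oblivious} random-neighbour procedure analysed there (the paper even remarks it adopts this ``weaker bound'' because it is easy to handle analytically). Your proposal therefore reconstructs, correctly and by essentially the same route as the cited source, the argument behind the citation: the selection probability $\Pr[Z_u=w]=(1-(1-p)^n)/n$ (your harmonic-moment computation is right, and also follows directly by symmetry over $V_2$), the mutual independence of the $Z_u$ (they depend on disjoint sets of edge indicators plus private randomness), and the bounded-difference concentration of the number of distinct selections all hold as you state them.

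One caveat is worth recording. What your argument actually proves is $\mu(\gnnp)\ge(1-e^{e^{-a}-1})n-\osymbol(n)$ a.a.s., and the slack cannot be removed by this method: writing $Y$ for the size of the oblivious matching, a second-order expansion of $(1-q/n)^n$ with $q=1-(1-p)^n$ shows that $E[Y]$ exceeds $(1-e^{e^{-a}-1})n$ by only $\Theta(1)$, while $Y$ fluctuates on the scale $\sqrt{n}$, so the probability that the oblivious matching itself reaches the exact threshold tends to roughly $1/2$ rather than $1$. The verbatim inequality is rescued only because the maximum matching is a.a.s. linearly larger than the oblivious one, which your proof does not establish. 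For the paper this mismatch is immaterial: the neighbouring bound on $|V_2'|$ already carries a $+\osymbol(n)$ term, and the lemma is used (in \cref{lemma:vtwotonminusalpha}) only through limits of ratios, where an additive $\osymbol(n)$ error vanishes; still, you should either state your conclusion with the explicit slack or note that the exact form requires more than the oblivious bound.
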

	We considered the weaker bound for $\mu(G)$ for  $p(n) = \frac{a}{n}$ from that paper, due to the fact that it is much easier to analyze analytically and an application of it does not change the approximation ratio. 

	\newcounter{countervtwotonminusalpha}
	\setcounter{countervtwotonminusalpha}{\value{lemma}}
	Hence, let us bound a.a.s. the size of the smaller part of graph with respect to $n - \alpha(\gnnp)$.
	By choosing a suitable bound on $V_2'$ and high school mathematical analysis we obtain as follows.
	\begin{lemma}
		Let $p(n) = \frac{a}{n}$, for some $a > 0$.
		Then for an inequitable coloring $(V_1', V_2')$ of $\gnnp$, $\frac{|V_2'|}{n - \alpha(\gnnp)} \le 1.6$  a.a.s.		
		\label{lemma:vtwotonminusalpha}
	\end{lemma}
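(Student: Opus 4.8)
The plan is to feed the two probabilistic estimates already at hand into a single ratio and then reduce the whole statement to a one-variable calculus inequality. The numerator is controlled by the preceding lemma, which gives a.a.s.\ $|V_2'| \le n\bigl(1-(1-\tfrac{a}{n})^n\bigr) + \osymbol(n)$; since $n\,\bigl(e^{-a}-(1-\tfrac{a}{n})^n\bigr)=\Osymbol(1)$, this says a.a.s.\ $\tfrac{|V_2'|}{n} \le 1 - e^{-a} + \osymbol(1)$. The denominator I would first rewrite graph-theoretically: $\gnnp$ is bipartite, so K\"onig's theorem together with the Gallai identity gives the matching number $\mu(\gnnp) = 2n - \alpha(\gnnp)$, i.e.\ the denominator is governed by the maximum matching, for which \cite{MastinJGreedyOnline2013} supplies the a.a.s.\ lower bound $\mu(\gnnp) \ge (1 - e^{e^{-a}-1})n$. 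Both estimates hold a.a.s., hence simultaneously a.a.s.\ by a union bound, and the lower bound keeps the denominator $\Theta(n)$ away from $0$, so the division is legitimate.

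Substituting the two bounds and letting $n\to\infty$ for a fixed $a>0$, the ratio is a.a.s.\ at most
\[
g(a) + \osymbol(1), \qquad g(a) = \frac{1 - e^{-a}}{1 - e^{\,e^{-a}-1}},
\]
so it suffices to prove $\sup_{a>0} g(a) < 1.6$, after which the $\osymbol(1)$ slack is absorbed for all large $n$. I would analyse $g$ through the substitution $t = e^{-a} \in (0,1)$, which turns it into $g = \tfrac{1-t}{1 - e^{t-1}}$. A direct differentiation shows that the sign of $g'(t)$ equals that of $h(t) := (2-t)e^{t-1} - 1$; since $h'(t) = (1-t)e^{t-1} > 0$ on $(0,1)$ and $h(1) = 0$, we get $h < 0$, hence $g$ is strictly decreasing. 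Therefore $g$ is maximised as $t \to 0^+$ (that is, $a \to \infty$), with value $\tfrac{1}{1 - e^{-1}} = \tfrac{e}{e-1} \approx 1.582 < 1.6$, while $g \to 1$ as $t \to 1^-$; this yields the claimed bound for every $a>0$.

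I expect the only genuine work to be this analytic step, and the trick is the reduction to the increasing auxiliary function $h$: naively differentiating the transcendental quotient and reading its sign off sample points would not be rigorous, whereas observing $h' > 0$ and $h(1)=0$ settles the monotonicity cleanly. Everything else is bookkeeping: the union bound combining the two a.a.s.\ events, the elementary $\Osymbol(1/n)$ estimate replacing $(1-\tfrac{a}{n})^n$ by $e^{-a}$ in the numerator, and the single invocation of K\"onig's theorem to express the denominator through $\mu(\gnnp)$.
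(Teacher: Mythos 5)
Your proposal is correct and takes essentially the same route as the paper's own proof: the same two ingredients (the preceding lemma for $|V_2'|$ and the Mastin--Jaillet matching bound for the denominator) are combined into the same limiting ratio $\frac{1-e^{-a}}{1-e^{e^{-a}-1}}$, whose supremum $e/(e-1)<1.6$ is established by a monotonicity argument---your substitution $t=e^{-a}$ with the auxiliary function $h(t)=(2-t)e^{t-1}-1$ is just a cleaner repackaging of the paper's derivative computation in the variable $a$. The only substantive difference is that you state the K\"onig--Gallai identity $\mu(\gnnp)=2n-\alpha(\gnnp)$ explicitly, which incidentally exposes a notational slip inherited from the paper: since $\alpha\ge n$ for a bipartite graph with parts of size $n$, the denominator as literally written is non-positive, and both your substitution and the paper's inequality ``$n-\alpha(\gnnp)\ge n(1-e^{e^{-a}-1})$'' only make sense when the denominator is read as $|J|-\alpha(\gnnp)=2n-\alpha(\gnnp)=\mu(\gnnp)$, exactly as it is used in the main text.
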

	However, the proof was moved to the appendix due to the space limitations.
	
	Now, we have to prove that for $p(n) = \omegasymbol(\frac{1}{n})$ the matching has size $n(1 - \osymbol(1))$ a.a.s.
	\begin{theorem}[\cite{Bollobas2011RandomGraphs}]
		If $np(n) - \log n \rightarrow \infty$ then a.a.s  $\mu(\gnnp) = n$. 
		\label{theorem:matchingI}
	\end{theorem}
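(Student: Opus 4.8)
The plan is to reconstruct the standard first-moment proof of the perfect-matching threshold, since $\mu(\gnnp) = n$ is exactly the event that a realization has a perfect matching. Write a realization of $\gnnp$ as $G = (\VA \cup \VB, E)$ with $|\VA| = |\VB| = n$ and put $p = p(n)$, $\omega := np(n) - \log n \to \infty$. Because the parts are equal, Hall's theorem tells us $G$ has no perfect matching iff there is a nonempty $S \subseteq \VA$ with $|N(S)| < |S|$. I would take $S$ to be a \emph{minimal} such witness and set $T := N(S)$. Minimality forces $|T| = |S| - 1$ and, crucially, forces every vertex $t \in T$ to have at least two neighbours in $S$: if some $t$ had a unique neighbour $s \in S$, then $S \setminus \{s\}$ would satisfy $|N(S \setminus \{s\})| \le |T| - 1 = |S \setminus \{s\}| - 1$, contradicting minimality. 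A complementation argument ($R := \VB \setminus T$ is a witness on the opposite side whenever $|S|$ is large) lets me restrict to witnesses of size $k := |S| \le \lceil n/2 \rceil$ at the cost of a factor $2$, which guarantees $n - k + 1 > n/2$ in the edge count below. Since the event of having a perfect matching is increasing, I may also assume $p \le 2\log n / n$, hence $\omega = \Osymbol(\log n)$, by replacing $p$ with $\min(p, 2\log n/n)$ and coupling.

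The hypothesis $\omega \to \infty$ is exactly the threshold for the absence of isolated vertices, and this is the only place it is genuinely needed. The expected number of isolated vertices is $2n(1-p)^n \le 2n e^{-np} = 2 e^{\log n - np} = 2e^{-\omega} \to 0$, so by Markov's inequality a.a.s. $\delta(G) \ge 1$; in particular no witness of size $k = 1$ occurs and every minimal witness has $k \ge 2$.

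It then remains to bound the probability of a minimal witness of size $k \ge 2$ by a union bound. For fixed $S$ of size $k$ and fixed $T$ of size $k-1$, the configuration requires that (i) no edge runs from $S$ to $\VB \setminus T$, of probability $(1-p)^{k(n-k+1)}$, and (ii) each of the $k-1$ vertices of $T$ sends at least two edges into $S$, of probability at most $\bigl(\tbinom{k}{2}p^2\bigr)^{k-1}$; these events concern disjoint edge sets and so are independent. This gives
\[
\mathbb{P}[\,\text{no PM},\ \delta(G)\ge 1\,] \le 2\sum_{k=2}^{\lceil n/2\rceil}\binom{n}{k}\binom{n}{k-1}\bigl(\tbinom{k}{2}p^2\bigr)^{k-1}(1-p)^{k(n-k+1)}.
\]
The main obstacle, and the reason condition (ii) cannot be dropped, already appears at $k = 2$: the naive factor $\binom{n}{2}\binom{n}{1}(1-p)^{2(n-1)} \approx \tfrac12 n^3 e^{-2np}$ is about $\tfrac12 n e^{-2\omega}$, which need not vanish when $\omega \to \infty$ slowly, so Hall's condition alone is too weak. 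The extra factor $\tbinom{2}{2}p^2$, of order $(\log n/n)^2$, supplied by the degree-two structure pulls the $k=2$ term down to $\Osymbol\bigl((\log n)^2 e^{-2\omega}/n\bigr) \to 0$. For the remaining terms I would use $\binom{n}{k}\binom{n}{k-1} \le (en/k)^{2k}$ together with $n - k + 1 > n/2$, so that $(1-p)^{k(n-k+1)} \le e^{-pkn/2}$; collecting the per-$k$ factors bounds the summand by $\beta^k$ with $\beta = \Osymbol\bigl(n^{3/2} p^2 e^{-\omega/2}\bigr) = \Osymbol\bigl((\log n)^2 n^{-1/2}\bigr) \to 0$, so the series is geometric and dominated by $k=2$. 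Hence $\mathbb{P}[\text{no PM}] \to 0$ and $\mu(\gnnp) = n$ a.a.s.
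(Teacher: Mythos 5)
The paper does not actually prove this statement; it is quoted from Bollob\'as's monograph as \cref{theorem:matchingI}, so your proposal can only be judged on its own merits. Its skeleton is the standard Erd\H{o}s--R\'enyi argument and is sound: Hall's theorem, passing to a minimal violator $(S,T)$ with $|T|=|S|-1$ in which every $t\in T$ has at least two neighbours in $S$, the isolated-vertex computation disposing of $k=1$ (indeed the only place where $\omega:=np(n)-\log n\to\infty$ is used in full strength), the complementation argument restricting to $k\le\lceil n/2\rceil$, the monotone coupling allowing $p\le 2\log n/n$, and the displayed union bound are all correct.

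The gap is in the final summation for $k\ge 3$. The inequality you claim, summand $\le\beta^k$ with $\beta=\Osymbol\bigl(n^{3/2}p^2e^{-\omega/2}\bigr)$, does not follow from the bounds you invoke: since the factor $\binom{k}{2}p^2$ occurs with exponent $k-1$ rather than $k$, collecting your factors actually yields summand $\le\frac{2}{k^2p^2}\gamma^k$ with $\gamma=\frac{e^2}{2}n^{3/2}p^2e^{-\omega/2}$, and the leftover prefactor $2/(k^2p^2)=\Theta\bigl(n^2/(k^2(\log n)^2)\bigr)$ is polynomially large. This is not cosmetic: in the regime where $\omega\to\infty$ slowly---exactly the delicate case of the theorem---your per-$k$ bound diverges for fixed $k\ge3$. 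Take $\omega=\log\log n$ and $k=3$: your bound is $(en/3)^6\bigl(\tbinom{3}{2}p^2\bigr)^2e^{-3pn/2}=\Theta\bigl(n^{1/2}(\log n)^{5/2}\bigr)\to\infty$, whereas the true term is $\binom{n}{3}\binom{n}{2}\bigl(3p^2\bigr)^2(1-p)^{3(n-2)}=\Theta\bigl(n^{-2}\log n\bigr)\to 0$. Two of your estimates are too lossy for small $k$: $\binom{n}{k}\binom{n}{k-1}\le(en/k)^{2k}$ wastes a factor of order $n$, and $(1-p)^{k(n-k+1)}\le e^{-pkn/2}$ wastes a factor of order $n^{k/2}e^{k\omega/2}$; the factor $p^{2(k-1)}$ cannot pay for both when $k$ is a constant. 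The standard repair is to split the range of $k$: for $3\le k\le n^{3/5}$, say, keep $(1-p)^{k(n-k+1)}\le e^{-pkn(1-\osymbol(1))}$ and the sharper count $n^{2k-1}/(k!\,(k-1)!)$, which gives a per-$k$ bound of order $e^{-\omega}\bigl(C(\log n)^2e^{-\omega}/n\bigr)^{k-1}$, summable and indeed dominated by $k=2$; only for $k\ge n^{3/5}$ are your crude bounds safe, where they do give a vanishing geometric tail such as $n^2\gamma^k$ with $\gamma\le n^{-1/4}$. With that split your argument goes through.
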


	And as a corollary,
	\begin{corollary}
	Let $p(n) = \Omega(1)$ then a.a.s. $\gnnp$ has matching of size $n$.
	\end{corollary}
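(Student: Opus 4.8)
The plan is to derive this directly from \cref{theorem:matchingI} by verifying that its hypothesis is met. The theorem guarantees a perfect matching a.a.s. whenever $np(n) - \log n \to \infty$, so it suffices to show that the condition $p(n) = \Omega(1)$ forces this divergence.

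First I would unpack the meaning of $p(n) = \Omega(1)$: there exist a constant $c > 0$ and an index $n_0$ such that $p(n) \ge c$ for all $n \ge n_0$. From this lower bound I obtain $np(n) \ge cn$ for all $n \ge n_0$, and hence
\[
np(n) - \log n \ge cn - \log n.
\]
Since a linear function dominates a logarithmic one, $cn - \log n \to \infty$ as $n \to \infty$. Therefore $np(n) - \log n \to \infty$, which is precisely the assumption of \cref{theorem:matchingI}.

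Applying \cref{theorem:matchingI} then yields that a.a.s. $\mu(\gnnp) = n$, i.e. $\gnnp$ admits a matching saturating the part of size $n$, which is exactly a matching of size $n$. This completes the argument.

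Because the corollary is an immediate specialization of the cited theorem, there is essentially no substantive obstacle here; the only point requiring care is the elementary asymptotic comparison $cn - \log n \to \infty$, together with correctly invoking the definition of the $\Omega(\cdot)$ notation. I expect the proof to be a single short paragraph once these routine observations are recorded.
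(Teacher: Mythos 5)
Your proof is correct and is exactly the argument the paper intends: the corollary follows from \cref{theorem:matchingI} by noting that $p(n) \ge c > 0$ eventually gives $np(n) - \log n \ge cn - \log n \to \infty$ (the paper leaves this step implicit, stating the corollary without proof). No gaps; your write-up simply makes the routine verification explicit.
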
 	

	Finally we have.
	\begin{theorem}[\cite{Zito2003SmallMaximalMatchings}]
		Let $p(n)n \rightarrow \infty$, then a.a.s. $\beta(\gnnp) > n -\frac{2\log np(n)}{\log (\frac{1}{1-p(n)})}$, where $\beta(G)$ is the size of the smallest maximal matching in $G$.
	\end{theorem}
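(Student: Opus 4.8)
The plan is to prove the contrapositive by a first-moment (union-bound) argument, after reducing the smallest maximal matching to a purely structural object: an \emph{empty rectangle}. Recall that $\beta(G)$ is the size of a smallest maximal matching. If $M$ is any maximal matching of $\gnnp$ with parts $V_A,V_B$, then its unmatched vertices $U_A = V_A\setminus V(M)$ and $U_B = V_B\setminus V(M)$ satisfy $|U_A|=|U_B|=n-|M|$ and, by maximality, there is no edge between $U_A$ and $U_B$ (otherwise $M$ could be extended). Thus every maximal matching of size $k$ exposes an empty rectangle $U_A\times U_B$ of side $n-k$. Writing $t=\frac{2\log(np(n))}{\log(1/(1-p(n)))}$, the event $\{\beta(\gnnp)\le n-t\}$ is therefore contained in the event that $\gnnp$ has an empty $s\times s$ rectangle with $s=\lceil t\rceil$ (a larger empty rectangle always contains a smaller one; and if $n-\beta\ge t$ then $n-\beta\ge\lceil t\rceil$ since $n-\beta$ is an integer). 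Hence it suffices to show that a.a.s.\ $\gnnp$ has no empty $\lceil t\rceil\times\lceil t\rceil$ rectangle.

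First I would bound the expected number of empty $s\times s$ rectangles. Choosing $U_A$ and $U_B$ in $\binom{n}{s}$ ways each, and noting that the $s^2$ potential edges between them are absent independently with probability $(1-p(n))^{s^2}$, Markov's inequality gives
\[
\Pr\bigl[\gnnp \text{ has an empty } s\times s \text{ rectangle}\bigr]\le \binom{n}{s}^2 (1-p(n))^{s^2}.
\]
Taking logarithms and using $\binom{n}{s}\le (en/s)^s$, the exponent is at most $2s\bigl(\log\frac{en}{s}-\tfrac{s}{2}\log\frac{1}{1-p(n)}\bigr)$, so everything reduces to showing that the bracket is negative and bounded away from $0$ at $s=\lceil t\rceil$.

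The analytic heart of the argument, and the step I expect to be the main obstacle, is this threshold calculation: one must show that $s=t$ is precisely where the entropy term $2\log\binom{n}{s}\approx 2s\log\frac{en}{s}$ is overtaken by the emptiness cost $s^2\log\frac{1}{1-p(n)}$. Writing $L=\log\frac{1}{1-p(n)}$ and $\lambda=\log(np(n))$, the defining balance $sL=2\log\frac{en}{s}$ is self-referential, and the reason the bound carries $\log(np(n))$ rather than $\log n$ is exactly that its solution satisfies $\log\frac{n}{s}=\lambda(1+\osymbol(1))$: since $p(n)\le L\le \frac{p(n)}{1-p(n)}$, one gets $nL=\Theta(np(n))$ for $p(n)$ bounded away from $1$, so $\log\frac{n}{t}=\log\frac{nL}{2\lambda}=\lambda-\log(2\lambda)+\Osymbol(1)$. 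Substituting $s=t=2\lambda/L$ makes $\tfrac{s}{2}L=\lambda$ while $\log\frac{en}{s}=\lambda-\log(2\lambda)+\Osymbol(1)$, so the bracket equals $-\log(2\lambda)+\Osymbol(1)\to-\infty$, because $p(n)n\to\infty$ forces $\lambda\to\infty$. Hence the expectation tends to $0$, a.a.s.\ there is no empty $\lceil t\rceil\times\lceil t\rceil$ rectangle, and therefore a.a.s.\ $\beta(\gnnp)>n-t$. I would finish by checking the routine boundary points: that $t=\osymbol(n)$, so the estimates for $\binom{n}{s}$ and $\log\frac{n}{s}$ are valid; that the ceiling only converts a weak inequality into the stated strict one; that the very dense regime $p(n)\to 1$ (where $\beta$ is trivially close to $n$) is handled identically; and I would remark that the matchability of the complement, which a genuine maximal matching also requires, is irrelevant to this lower bound, since dropping it only enlarges the event we are bounding.
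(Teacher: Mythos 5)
The first thing to note is that the paper does not prove this statement at all: it is imported verbatim from the cited reference \cite{Zito2003SmallMaximalMatchings}, so there is no internal proof to compare against, and your reconstruction has to be judged against the statement itself. Your core argument is the standard (and surely the intended) one: the vertices left exposed by a maximal matching of size $k$ span an empty $(n-k)\times(n-k)$ rectangle, so it suffices to kill empty $\lceil t\rceil\times\lceil t\rceil$ rectangles by a first-moment bound. Your threshold computation is correct whenever $p(n)$ is bounded away from $1$: there $L:=\log\frac{1}{1-p(n)}=\Theta(p(n))$, hence the bracket equals $-\log\bigl(2\log np(n)\bigr)+\Osymbol(1)\to-\infty$, and your bookkeeping (monotonicity in $s$, the ceiling, strictness, $t=\osymbol(n)$) is sound. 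That regime covers the only use the paper makes of the theorem, namely \cref{corollary:matchingII}, which assumes $p(n)=\omega(\frac{1}{n})$ and $p(n)=\osymbol(1)$.

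The genuine gap is your closing claim that the dense regime $p(n)\to 1$ is ``handled identically.'' It is not: the identity $nL=\Theta(np(n))$ on which your computation rests fails there, and in fact the statement as written (with the sole hypothesis $p(n)n\to\infty$) is false in that regime, so no repair of that sentence is possible. Take $p(n)=1-\frac{1}{n}$: then $L=\log n$ and $\log np(n)=\log(n-1)$, so $t\to 2$ and the theorem would assert that a.a.s. $\beta>n-2$. But the expected number of empty $2\times 2$ rectangles is $\binom{n}{2}^2(1-p(n))^4\to\frac14$, and a routine second-moment argument shows such a rectangle exists with probability bounded away from $0$; conditioned on one, the bipartite graph induced on the remaining $2(n-2)$ vertices a.a.s. has a perfect matching (its minimum degree is a.a.s. $n-2-\Osymbol(\log n)$), and that perfect matching is a maximal matching of the whole graph, since the only exposed vertices are the four rectangle vertices, which span no edge. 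Hence $\Pr[\beta(\gnnp)\le n-2]$ is bounded away from $0$, contradicting the claimed a.a.s. bound; the same happens at, e.g., $p(n)=1-n^{-1/2}$, where $t\to 4$. So the theorem needs a hypothesis keeping $p(n)$ away from $1$ (evidently lost in the paper's transcription of the cited result), and while your proof is exactly right once that hypothesis is added, the sentence claiming the dense case follows identically is covering an actual falsehood rather than a routine verification.
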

	
	And as a corollary
	\begin{corollary}
		Let $p(n) = \omega(\frac{1}{n})$ and $p(n) = \osymbol(1)$ then a.a.s. $\gnnp$ has matching of size $(1 - \osymbol(1))n$.
		\label{corollary:matchingII}
	\end{corollary}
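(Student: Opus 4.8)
The plan is to derive the corollary from the immediately preceding theorem of \cite{Zito2003SmallMaximalMatchings} by using the trivial inequality $\mu(G) \ge \beta(G)$, valid for every graph, since the maximum matching is in particular a maximal matching and hence is at least as large as the smallest maximal one. Thus a lower bound on the size of the smallest maximal matching yields the same lower bound on $\mu$. First I would check that the hypothesis of the cited theorem is met: the assumption $p(n) = \omegasymbol(\frac{1}{n})$ gives $p(n)n \rightarrow \infty$, so a.a.s.
\[
    \beta(\gnnp) > n - \frac{2\log np(n)}{\log\frac{1}{1-p(n)}}.
\]

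It then remains to show that the subtracted term is $\osymbol(n)$. The key step is to bound the denominator from below. Using $-\log(1-x) \ge x$ for all $x \in [0,1)$, we have $\log\frac{1}{1-p(n)} = -\log(1-p(n)) \ge p(n)$, and therefore
\[
    \frac{2\log np(n)}{\log\frac{1}{1-p(n)}} \le \frac{2\log np(n)}{p(n)} = 2n \cdot \frac{\log np(n)}{np(n)}.
\]
Writing $d = np(n)$, the hypothesis $p(n) = \omegasymbol(\frac{1}{n})$ forces $d \rightarrow \infty$, so that $\frac{\log d}{d} \rightarrow 0$ and the whole expression is $2n \cdot \osymbol(1) = \osymbol(n)$.

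Finally I would combine the two estimates: a.a.s. $\mu(\gnnp) \ge \beta(\gnnp) > n - \osymbol(n) = (1-\osymbol(1))n$, which is the claim. I do not expect any genuine obstacle here; the only point requiring a little care is the lower estimate $\log\frac{1}{1-p(n)} \ge p(n)$ on the denominator, which lets one rewrite the bound in the revealing form $2n \cdot \frac{\log d}{d}$ with $d \rightarrow \infty$. Note that the assumption $p(n) = \osymbol(1)$ is not actually needed for this argument; it only serves to delineate the regime complementary to the previous corollary, where $p(n) = \Omega(1)$ already yields a perfect matching. The entire estimate is driven solely by $np(n) \rightarrow \infty$.
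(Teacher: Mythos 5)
Your proof is correct, and it follows the same overall route as the paper: invoke Zito's theorem on the minimum maximal matching (its hypothesis $np(n)\to\infty$ holds since $p(n)=\omegasymbol(\frac{1}{n})$), use $\mu(G)\ge\beta(G)$, and show that the subtracted term $\frac{2\log np(n)}{\log(1/(1-p(n)))}$ is $\osymbol(n)$. Where you differ is in the justification of this last step, and there your version is the sound one. The paper's entire proof is the remark that $\log\frac{1}{1-p(n)}$ tends to infinity while $\log np(n)=\Osymbol(\log n)$; but under the corollary's own hypothesis $p(n)=\osymbol(1)$ we have $\log\frac{1}{1-p(n)}=-\log(1-p(n))\to 0$, not infinity (it would diverge only if $p(n)\to 1$, which is excluded), so the paper's one-line justification is false as written, and comparing the numerator to $\log n$ is in any case not the relevant comparison. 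Your estimate $\log\frac{1}{1-p(n)}\ge p(n)$, which bounds the error term by $2n\cdot\frac{\log d}{d}$ with $d=np(n)\to\infty$, pinpoints the fact that actually drives the result, namely $\log d=\osymbol(d)$; and, as you observe, it shows the hypothesis $p(n)=\osymbol(1)$ is not needed (any $p(n)=\omegasymbol(\frac{1}{n})$ with $p(n)<1$ works). In short: same route, but your write-up supplies the correct quantitative step and thereby repairs a genuine error in the paper's own proof.
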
 
	\begin{proof}
		Observe only that the limit of $\log \frac{1}{1 - p(n)}$ tends to infinity, but $\log np(n) = \Osymbol (\log n)$.
	\end{proof}
	
	Hence, we know how $\mu(\gnnp)$ behaves in each of the ranges.
	All these observations allow us to construct an algorithm for random bipartite graphs.
	\begin{algorithm}[]
	\caption{An algorithm for $Q|G = \gnnp, p_j=1|C_{\max}$}
	\label{algorithm:2CmaxRandomBipartite}
	\begin{algorithmic}[1]
		\Require A bipartite graph $G = (J, E)$, machines $M =\{M_1, \ldots, M_m\}$
		\Ensure A schedule with $\cmaxcost$ a.a.s. at most $2\optcmaxcost$.
		\State Let $(V_1', V_2')$ be any inequitable $2$-coloring of $G$. 
		\State Let $\optcmaxcostlb$ be the least time such that rounded down capacities of the machines are at least $n$.
		\State Take the least $k \le m$ such that: 
		\Statex \hspace{\algorithmicindent} the capacities of $M_2, \ldots, M_k$ are at least $\frac{1}{2}|V_2'|$,
		\Statex \hspace{\algorithmicindent} $k=m$, otherwise.
		\State \Return $M_1, M_{k+1},  \ldots, M_m \leftarrow V_1'; M_{2},  \ldots, M_k \leftarrow V_2'$.
	\end{algorithmic}
	\end{algorithm}	

	\begin{theorem}
		For $Q|G = \gnnp, p_j=1|\cmaxcost$, where $p(n)$ is a monotonic existence-of-edge probability function, \cref{algorithm:2CmaxRandomBipartite} returns a schedule $S$, such that a.a.s. $\cmaxcost(S) \le 2\optcmaxcost$.
	\end{theorem}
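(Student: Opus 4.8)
The plan is to compare the schedule $S$ returned by \cref{algorithm:2CmaxRandomBipartite} against two lower bounds on $\optcmaxcost$ and then to feed in the a.a.s. bounds on the smaller colour class collected above. The first lower bound is the work bound: every job of $J$ must be run, so the time $\optcmaxcostlb$ at which the rounded-down capacities first cover all of $J$ satisfies $\optcmaxcostlb \le \optcmaxcost$. The second is structural: in any feasible schedule the fastest machine $M_1$ receives an independent set, hence at most $\alpha(G) = |J| - \mu(G)$ jobs, so $M_2, \ldots, M_m$ must jointly process at least $\mu(G)$ jobs. Writing $L$ for the least time at which the rounded-down capacities of $M_2, \ldots, M_m$ reach $\mu(G)$, this gives $L \le \optcmaxcost$. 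These are exactly the two quantities the algorithm races against: $\optcmaxcostlb$ fixes the horizon, and the smaller part $V_2'$ is the obstruction measured by $\mu(G)$.

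First I would establish the single probabilistic fact that drives everything: a.a.s. $|V_2'| \le 1.6\,\mu(G)$. This is assembled from the regime analysis already in place. For $p(n)=\osymbol(1/n)$, \cref{corollary:concentrationlowp} gives $|V_2'|=\osymbol(n)$, and the components are a.a.s. so simple (isolated edges and small stars) that the smaller side is almost entirely matched, i.e. $|V_2'|=(1+\osymbol(1))\mu(G)$. For $p(n)=a/n$ this is \cref{lemma:vtwotonminusalpha}, since by K\"onig's theorem the denominator $n-\alpha(\gnnp)$ appearing there is the matching number $\mu(\gnnp)$. For $p(n)=\omegasymbol(1/n)$, \cref{theorem:matchingI} and \cref{corollary:matchingII} give $\mu(\gnnp)=(1-\osymbol(1))n$ while $|V_2'|\le n$, so the ratio tends to $1$. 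I would then condition on the event $\{|V_2'|\le 1.6\,\mu(G)\}$, of probability $1-\osymbol(1)$, and argue deterministically on it.

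On this event I split on the choice of $k$. If some $k\le m$ genuinely meets the threshold, then $M_2,\ldots,M_k$ have capacity at least $\tfrac12|V_2'|$ at time $\optcmaxcostlb$, hence at least $|V_2'|$ at time $2\optcmaxcostlb$, so $V_2'$ fits; and a capacity count shows the complementary machines $M_1,M_{k+1},\ldots,M_m$ have room for $V_1'$ by time $2\optcmaxcostlb$, separating the subcase where $M_1$ alone already covers $V_1'$ from the subcase where $M_1$ is slow and the capacity consumed by $M_2,\ldots,M_k$ provably stays below $|V_2'|$ plus a controlled overshoot, leaving at least $|V_1'|$ for the rest. Either way $\cmaxcost(S)\le 2\optcmaxcostlb\le 2\optcmaxcost$. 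If instead no admissible $k$ exists and the algorithm falls back to $k=m$, then $M_2,\ldots,M_m$ are too slow to reach $\tfrac12|V_2'|$ by time $\optcmaxcostlb$; consequently $M_1$ retains capacity at least $|J|-\tfrac12|V_2'|\ge|V_1'|$ and finishes $V_1'$ by $\optcmaxcostlb$, while $V_2'$ is placed on $M_2,\ldots,M_m$ and, since $|V_2'|\le 1.6\,\mu(G)<2\mu(G)$ and their capacity at least doubles between $L$ and $2L$, fits by time $2L$. Hence $\cmaxcost(S)\le\max\{\optcmaxcostlb,2L\}\le 2\optcmaxcost$.

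I expect the main obstacle to be the deterministic capacity bookkeeping for $V_1'$ in the non-fallback case: because speeds may be badly skewed and capacities are rounded down, one must bound the overshoot of the single machine $M_k$ that tips $M_2,\ldots,M_k$ over the $\tfrac12|V_2'|$ threshold and show it cannot eat into the room $V_1'$ needs once $M_1$'s own capacity is accounted for. The second delicate point is purely probabilistic, namely forcing $|V_2'|\le 1.6\,\mu(G)$ to hold a.a.s. simultaneously across all three ranges of $p(n)$, where only the critical range $a/n$ is tight and the constant $1.6<2$ is precisely what the factor-two doubling of capacities can absorb.
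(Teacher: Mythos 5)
Your proposal is correct and its skeleton coincides with the paper's: the same two lower bounds on $\optcmaxcost$ (the covering time $\optcmaxcostlb$, and the time needed by $M_2,\ldots,M_m$ to absorb the at least $|J|-\alpha(\gnnp)$ jobs that can never sit on $M_1$), the same split on whether some prefix $M_2,\ldots,M_k$ reaches capacity $\tfrac12|V_2'|$, the same time-doubling, and the same constant $1.6<2$ from \cref{lemma:vtwotonminusalpha}. Where you genuinely depart is the probabilistic part. The paper argues regime by regime: for $p(n)=\osymbol(1/n)$ it uses only $|V_2'|=\osymbol(n)$ (\cref{corollary:concentrationlowp}) and the claim that $M_2$ is ``underutilized,'' an argument which does not address the case where \emph{all} of $M_2,\ldots,M_m$ are so slow that the fallback $k=m$ triggers even though $|V_2'|=\osymbol(n)$; there the structural bound via the matching is still needed. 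You instead funnel all three regimes through the single a.a.s.\ invariant $|V_2'|\le 1.6\,\mu(\gnnp)$ and then argue purely deterministically, which both unifies the proof and repairs that sparse-regime gap --- at the price of one extra probabilistic claim (that in the regime $p(n)=\osymbol(1/n)$ the vertices lying in components other than single edges and stars are negligible compared to the number of isolated edges) which you only sketch and which is not proved anywhere in the paper, though it is true and provable by a first/second-moment argument. Two further points in your favour: your explicit König identification $|J|-\alpha(\gnnp)=\mu(\gnnp)$ silently fixes the paper's inconsistent use of ``$n-\alpha$'' versus ``$|J|-\alpha$'' in \cref{lemma:vtwotonminusalpha} and in the main proof, and your isolation of the $M_k$-overshoot in the non-fallback case (resolved by charging it against the capacity of $M_1\,$, since $M_1$ is at least as fast as $M_k$) spells out what the paper compresses into the one-line remark that the unused capacity on $M_2,\ldots,M_k$ is at most the capacity of $M_1$.
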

	\begin{proof}
		The proof consists of an analysis how for a given sequence of machines the approximation ratio behaves, when $n$ tends to infinity.
		Let $(V_1', V_2')$ be an inequitable coloring of bipartite graph $(J, E)$, the realization of $\mathcal{G}_{n,n,p(n)}$.
		Let also $\salg$ be the schedule produced by the algorithm and $\sopt$ an optimal schedule.

		First consider the case when $p(n) = \osymbol(\frac{1}{n})$.
		In this case by \cref{corollary:concentrationlowp} almost all of vertices are in $V_1'$, hence a.a.s. the machine $M_2$ will be underutilized for $n$ high enough.
		On the other hand the maximum completion time of $V_1'$ on $M \setminus \{M_2\}$ is at most $2$ times longer than completion time of $V_1'$ on $M$.
		
		Consider the case when $p(n) = \frac{a}{n}$.
		Assume that we guessed $\optcmaxcost$.
		Consider two cases:
		\begin{itemize}
			\item The capacities of $M_2, \ldots, M_k$ are at least $\frac{1}{2}|V_2'|$.
			In such a case we can schedule the jobs on $M_2, \ldots, M_k$ in time at most $2\optcmaxcostlb$.
			Moreover, the capacities that were not used on $M_2, \ldots, M_k$ are at most the capacity $M_1$.
			Hence by scheduling $V_1'$ on $M_1, M_{k+1}, \ldots, M_m$, in time at most $2\optcmaxcostlb$ all jobs can be scheduled.
			\item The capacities of $M_2, \ldots, M_m$ are less than $\frac{1}{2}|V_2'|$. 
			However, by \cref{lemma:vtwotonminusalpha}, a.a.s. the $M_2, \ldots, M_m$ are doing at most $1.6$ the minimum number of jobs that has to be assigned to them in any schedule.
			Again, a.a.s. the constructed schedule has length at most twice the optimum.
		\end{itemize}
	
		Consider the cases when $p(n) = \omega(\frac{1}{n})$ and $p(n) = \osymbol(1)$ or $p(n) = \Omega(1)$.
		In these cases the previous analysis holds, because a.a.s. $\mu(\gnnp) \ge n(1 - \osymbol(1))$ , by \cref{theorem:matchingI} and \cref{corollary:matchingII}.
		Hence $(|J| - \alpha(\gnnp)) > (1 - \osymbol(1))n$.
		Hence also $|V_2'| \le |V_2| < (1+\osymbol(1))(n - \alpha(\gnnp))$.
	\end{proof}

\section{Unrelated Machines}
	\label{section:unrelated}	
	We apply the FPTAS considered by \cite{JansesnPImprovedApproximation2001} to obtain the desired FPTAS.

	\begin{theorem}[\cite{JansesnPImprovedApproximation2001}]
		There exists an FPTAS $\Osymbol(n(m/\epsilon)^{\Osymbol(m)})$ time  for $Rm||\cmaxcost$.\label{fptas_unrel}
	\end{theorem}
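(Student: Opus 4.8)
The objective $\cmaxcost$ is exactly $\max_{i\in[m]}\ell_i$, where $\ell_i$ denotes the total load placed on $M_i$; hence an FPTAS must keep enough information about the load vector $(\ell_1,\dots,\ell_m)$ to recover this maximum, and the plan is to do so by a dynamic program over \emph{rounded} load vectors. First I would reduce the optimisation to a decision problem: for a guessed target $T$, decide whether there is an assignment with $\max_i\ell_i\le(1+\epsilon)T$, or certify that none of makespan $\le T$ exists. Running the $2$-approximation of \cite{LenstraApproximation1990} gives a value $A$ with $\optcmaxcost\le A\le 2\optcmaxcost$, so it suffices to try the $\Osymbol(\epsilon^{-1})$ geometric targets $T=(A/2)(1+\epsilon)^t\le A$; the smallest successful one yields a schedule of makespan at most $(1+\epsilon)\optcmaxcost$ after rescaling $\epsilon$, and the number of guesses is absorbed into the state-count factor below.

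For a fixed target $T$ I would run the DP that processes the jobs $J_1,\dots,J_n$ in turn and maintains the set of reachable rounded load vectors. Any pair with $p_{i,j}>T$ is forbidden, and any partial vector with a coordinate exceeding $(1+\epsilon)T$ is discarded. The transition on $J_j$ branches over the $m$ machines, adds the rounded $p_{i,j}$ to coordinate $i$, and re-rounds; a complete assignment is accepted iff some reachable final vector has $\max_i\ell_i\le(1+\epsilon)T$, so the makespan objective is read directly off the stored state rather than tracked separately. Correctness needs both directions: that a genuine schedule of makespan $\le T$ projects to a reachable rounded vector of value $\le(1+\epsilon)T$ (soundness of the rounding), and that every reachable vector is realised by an actual feasible assignment.

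The whole difficulty is to choose the rounding so that the number of reachable vectors is $(m/\epsilon)^{\Osymbol(m)}$ while the error accumulated over up to $n$ additions stays below $\epsilon T$. Plain arithmetic rounding at granularity $\epsilon T/n$ controls the error but produces $(n/\epsilon)^{m}$ vectors, which is too many. I would instead split, on each machine, the jobs with $p_{i,j}>\epsilon T/m$ (``big'') from the rest (``small''): in any schedule of makespan $\le T$ at most $\Osymbol(m/\epsilon)$ big jobs sit on one machine, so rounding big sizes to the grid of multiples of $\epsilon^2T/m$ introduces total error $\le\epsilon T$ per machine and leaves only $(m/\epsilon)^{\Osymbol(1)}$ attainable big-load levels per coordinate; the small jobs only need their aggregate per-machine contribution tracked to accuracy $\epsilon T/m$, giving $\Osymbol(m/\epsilon)$ further levels. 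Taking the product over the $m$ coordinates gives $(m/\epsilon)^{\Osymbol(m)}$ states, and multiplying by the $n$ jobs, the $\Osymbol(1)$ machines examined per transition, and the $\Osymbol(\epsilon^{-1})$ guesses of $T$ yields the claimed $\Osymbol\!\big(n(m/\epsilon)^{\Osymbol(m)}\big)$ running time.

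The step I expect to be the genuine obstacle is exactly this reconciliation of state count with error control, and in particular the soundness direction of the rounding: arguing that the coarse per-machine treatment of small jobs can always be ``filled in'' into a real assignment while losing only an additive $\epsilon T$. The localisation of $T$ via the constant-factor approximation and the plumbing of the DP transitions are routine by comparison.
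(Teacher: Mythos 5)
First, note that the paper itself contains no proof of this statement: it is imported verbatim from \cite{JansesnPImprovedApproximation2001}, so your attempt can only be compared against that work's method. Your skeleton is the right family of ideas, and parts of it are sound: localising $T$ with the $2$-approximation of \cite{LenstraApproximation1990} and searching $\Osymbol(\epsilon^{-1})$ geometric targets is standard; and your treatment of the \emph{big} jobs is correct, since in a schedule of makespan at most $(1+\epsilon)T$ each machine carries $\Osymbol(m/\epsilon)$ jobs with $p_{i,j}>\epsilon T/m$, so flooring their sizes to multiples of $\epsilon^2T/m$ costs at most $\Osymbol(\epsilon T)$ per machine while keeping $(m/\epsilon)^{\Osymbol(1)}$ big-load levels per coordinate.

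The gap is exactly the step you flagged, and it is fatal to the route as proposed: the small jobs. You assert that their per-machine aggregate "only needs to be tracked to accuracy $\epsilon T/m$, giving $\Osymbol(m/\epsilon)$ levels", but a per-job dynamic program that re-rounds the running sum to a grid of spacing $\epsilon T/m$ loses up to one grid unit \emph{per addition}; with up to $n$ small jobs on a machine the deviation can reach $n\epsilon T/m\gg\epsilon T$, so neither soundness nor the "fill in" direction holds. Controlling the error by per-addition rounding forces spacing $\epsilon T/n$ (a geometric grid fails identically, since running sums must be re-rounded $n$ times), which inflates the state space to $(n/\epsilon)^{\Osymbol(m)}$ --- i.e.\ you recover essentially the classical FPTAS of \cite{HorowitzExactAndApproximate1976}, which is precisely the bound this theorem improves. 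The point is that no purely load-vector DP with the claimed state count is known; Jansen and Porkolab escape the dilemma by \emph{not} aggregating small jobs in a DP at all. In their argument only the jobs that are large on their assigned machine are handled integrally (over rounded configurations), while the jobs that are small on their machine --- note that large/small is assignment-dependent, which your DP handles but your error analysis ignores --- are assigned fractionally by a linear program; a basic feasible solution has at most $m$ fractional jobs, each contributing at most $\epsilon T/m$ wherever it is finally fixed, so rounding it integrally in the style of \cite{LenstraApproximation1990} adds at most $\epsilon T$ to any machine. Replacing your small-job aggregation by such an LP-rounding step (or an equivalent mechanism that avoids per-addition rounding error) is the missing idea; without it the proposal does not establish the $\Osymbol\bigl(n(m/\epsilon)^{\Osymbol(m)}\bigr)$ running time.
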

	We can use an $\FPTAS$ for $R2||\cmaxcost$ to make an $\FPTAS$ for $R2|G = \bipartite|\cmaxcost$.
	Precisely, we propose a simple algorithm that constructs $2$-approximate solution of time $T$.
	Using it we set "private" loads on the machines to emulate the minimum processing times of jobs on each machine, common to all schedules.
	We do this by each connected component separately, therefore reducing the problem $R2|G = \bipartite|\cmaxcost$ into $R2||\cmaxcost$.
	We do so by constructing artificial jobs with processing times equal to the common to all schedules.
	They can be scheduled only on their corresponding machines, on the other machine they will have processing time equal to an unreasonable value, for example, $3T$.
	
	Let us start with this simple $2$-approximate algorithm.
	It consists of two procedures, reduction, and the proper algorithm.

	\begin{algorithm}[H]
		\begin{algorithmic}[1]
			\Require A set of $c$ connected components of bipartite graph $G$: $G_1,\ldots,G_c$, $G_k=(V_1^k \cup V_2^k, E_k)$, $k\in[c]$.
			 A set of $2$ unrelated machines.	
			\Ensure A set of $c$ jobs. Two sequences of values $P', P''$. 
			\For {$k \in [c]$}
			\State Let $p_{i,l}^* := \sum_{v_j \in V_l^k} p_{i, j}$, for $i, l \in [2]$.
			\State Construct the job $\job_{n+k}$ and the pair of values in the following way: 
			\If {$p_{1,1}^* \le p_{1,2}^*$ and $p_{2,2}^* \le p_{2,1}^*$}
			\State Let $p_{1, n+k} := p_{2, n+k} = 0, P'_k:= p_{1,1}^*, P''_k := p_{2,2}^*$.
			\ElsIf {$p_{1,2}^* \le p_{1,1}^*$ and $p_{2,1}^* \le p_{2,2}^*$}
			\State Let $p_{1, n+k} := p_{2, n+k} = 0, P'_k := p_{1,2}^*, P''_k := p_{2,1}^*$.
			\Else
			\State Let $p_{1, n+k} := \max\{p_{1,1}^*, p_{1,2}^*\} - \min\{p_{1,1}^*,p_{1,2}^*\}$.
			\State Let $p_{2, n+k} := \max\{p_{2,1}^*, p_{2,2}^*\} - \min\{p_{2,1}^*,p_{2,2}^*\}$.
			\State Let $P'_k := \min\{p_{1,1}^*,p_{1,2}^*\}, P''_k := \min\{p_{2,1}^*,p_{2,2}^*\}$.
			\EndIf
			\EndFor
			\State \Return $(\{J_{n+1}, \ldots, J_{n+k}\}, P', P'')$.
		\end{algorithmic}
		\caption{Reduction of jobs for $R2|G = \bipartite|\cmaxcost$.}
		\label{alg:unrelatedreduced}
	\end{algorithm}

	\begin{algorithm}[H]
	\begin{algorithmic}[1]
		\Require $G = \bipartite$, $M = \{M_1, M_2\}$	
		\Ensure A $2$-approximate schedule.
		\State $(J' , P', P'') = \cref{alg:unrelatedreduced}(G, M)$.
		\State For each $\job_j \in J'$: assign $\job_j$ to the machine where $p_{i, j}$ is the smallest. 
		\State Reconstruct the schedule from the assignments.
	\end{algorithmic}
	\caption{$2$-approximate algorithm for $R2|G = \bipartite|\cmaxcost$.}
	\label{alg:2apxUnrelated}
	\end{algorithm}

	\newcounter{countertheoremnunrel}
	\setcounter{countertheoremnunrel}{\value{theorem}}
	\begin{theorem}
		\cref{alg:2apxUnrelated} is $2$-approximate $\Osymbol(n)$-time algorithm for $R2|G = \bipartite|\cmaxcost$.
		\label{theorem:twoapproxunrel}
	\end{theorem}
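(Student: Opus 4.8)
The plan is to establish two claims: the returned schedule is feasible with makespan at most $2\optcmaxcost$, and the whole procedure runs in linear time. The conceptual core is that \cref{alg:unrelatedreduced} faithfully reduces $R2|G=\bipartite|\cmaxcost$ to an instance of $R2||\cmaxcost$ carrying fixed ``private'' loads, and that the assign-to-the-cheaper-machine rule of \cref{alg:2apxUnrelated} is $2$-approximate on that reduced instance.

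First I would record the structural fact that drives the reduction: since each $G_k$ is a \emph{connected} bipartite graph, the incompatibility constraint forces the two sides $V_1^k, V_2^k$ onto two different machines, so component $k$ admits exactly two configurations in any feasible schedule — either $(V_1^k\to M_1,\,V_2^k\to M_2)$, contributing $(p_{1,1}^*,p_{2,2}^*)$ to the machine loads $(L_1,L_2)$, or $(V_1^k\to M_2,\,V_2^k\to M_1)$, contributing $(p_{1,2}^*,p_{2,1}^*)$. I would then check, going through the three branches of \cref{alg:unrelatedreduced}, that the pair $(P'_k,P''_k)$ together with the single job $\job_{n+k}$ reproduces exactly these two load vectors: in the ``crossing'' branch the two placements of $\job_{n+k}$ recover the two configurations, while in the two ``dominated'' branches one configuration is componentwise at most the other, so committing to the dominant one (via $P'_k,P''_k$, with $\job_{n+k}$ of zero length on both machines) changes neither feasibility nor the optimum. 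The invariant to extract is that in \emph{every} branch $P'_k=\min\{p_{1,1}^*,p_{1,2}^*\}$, $P''_k=\min\{p_{2,1}^*,p_{2,2}^*\}$, and that $\min\{p_{1,n+k},p_{2,n+k}\}$ is the smaller of the two per-machine ``swing'' amounts.

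Next I would set up three lower bounds on $\optcmaxcost$ of the original instance, all consequences of this invariant. Writing $\Pi'=\sum_k P'_k$, $\Pi''=\sum_k P''_k$, and $\Sigma=\sum_k\min\{p_{1,n+k},p_{2,n+k}\}$: every schedule places at least $P'_k$ of component $k$ on $M_1$ and at least $P''_k$ on $M_2$, giving $\optcmaxcost\ge\Pi'$ and $\optcmaxcost\ge\Pi''$; and a short per-component check (true in all three branches) shows each component contributes at least $P'_k+P''_k+\min\{p_{1,n+k},p_{2,n+k}\}$ to $L_1+L_2$, whence $2\optcmaxcost\ge L_1+L_2\ge\Pi'+\Pi''+\Sigma$. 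I would then analyze the output: because \cref{alg:2apxUnrelated} places each $\job_{n+k}$ on its cheaper machine, the reconstructed loads are $L_1=\Pi'+S_1$ and $L_2=\Pi''+S_2$, where $S_1,S_2$ split $\Sigma$ since a job routed to $M_i$ contributes exactly its minimum swing. The makespan is $\max\{\Pi'+S_1,\Pi''+S_2\}$, and bounding the larger term, say $\Pi'+S_1\le\Pi'+\Sigma\le\Pi'+(2\optcmaxcost-\Pi'-\Pi'')\le 2\optcmaxcost$, closes the approximation argument. Feasibility is immediate, as each component is reconstructed as one of its two proper $2$-colorings.

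Finally, for the running time I would observe that one BFS/DFS pass finds the connected components together with their sides in linear time, after which each component is processed in time proportional to its size to form the sums $p_{i,l}^*$, the pair $(P'_k,P''_k)$, and the job $\job_{n+k}$; the assign-to-cheaper step and the reconstruction are then $\Osymbol(c)=\Osymbol(n)$, giving the claimed linear bound. I expect the main obstacle to be the reduction's correctness rather than the counting: one must verify all three branches of \cref{alg:unrelatedreduced} and confirm both that dropping a dominated configuration never lowers the optimum and that the crossing branch reproduces the two configurations exactly, so that the clean inequality $2\optcmaxcost\ge\Pi'+\Pi''+\Sigma$ — the single place where the factor $2$ is actually spent — is legitimate.
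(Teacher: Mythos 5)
Your proposal is correct and follows essentially the same route as the paper's proof: both reduce per connected component to the two possible placements of $(V_1^k,V_2^k)$, extract the mandatory minimum loads ($\Pi'=T_1$, $\Pi''=T_2$) plus the per-component minimum ``swing'' ($\Sigma=T_{extra}$), bound the algorithm's makespan by $\max\{\Pi',\Pi''\}+\Sigma$, and spend the factor $2$ on the average-load lower bound $2\optcmaxcost\ge \Pi'+\Pi''+\Sigma$. Your write-up is somewhat more explicit about the branch invariants and the (omitted in the paper) running-time accounting, but conceptually it is the same argument.
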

	For the proof of the theorem we refer the reader to the appendix.		
	
	Now we are able to use the FPTAS for $Rm||\cmaxcost$ presented in \cite{JansesnPImprovedApproximation2001} to obtain an FPTAS $R2|G = \bipartite|\cmaxcost$.
	\begin{algorithm}[H]
		\begin{algorithmic}[1]
			\Require $G = \bipartite = G_1 \cup \ldots \cup G_c$, $M = \{M_1, M_2\}$	
			\Ensure A $(1 + \epsilon)$-approximate algorithm for $R2|G = \bipartite|\cmaxcost$.
			\State Let $T$ be $\cmaxcost$ of the schedule constructed by \cref{alg:2apxUnrelated}.
			\State $(J' , P', P'') := \cref{alg:unrelatedreduced}(G, M)$.
			\State Construct $\job_{n+k+1}$ and $\job_{n+k+2}$ as follows:
			\State \indent Let $p_{1, n+k+1} := \sum_{k = 1}^cP'_k; \text{and } p_{2, n+k+1} := 2T$.
			\State \indent Let $p_{2, n+k+2} := \sum_{k = 1}^cP''_k; \text{and } p_{1, n+k+2} := 2T$.
			\State Apply the FPTAS by Jansen and Porkolab \cite{JansesnPImprovedApproximation2001} to $J' \cup \{\job_{n+k+1}, \job_{n+k+2}\}$.
			\State From the constructed schedule reconstruct the schedule for original jobs.
		\end{algorithmic}
		\caption{$(1+\epsilon)$-approximate algorithm for $R2|G = \bipartite|\cmaxcost$.}
		\label{alg:FPTASUnrelated}
	\end{algorithm}

	\begin{theorem}
		\cref{alg:FPTASUnrelated} is an $\Osymbol(n\frac{1}{\epsilon})$-time FPTAS for $R2|G = \bipartite|\cmaxcost$.
		\label{theorem:FPTAS}
	\end{theorem}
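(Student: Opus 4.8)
The plan is to show that \cref{alg:FPTASUnrelated} correctly reduces $R2|G = \bipartite|\cmaxcost$ to $R2||\cmaxcost$ (to which \cref{fptas_unrel} applies) while preserving the optimal makespan, and then to bound the running time. The key structural observation driving the reduction is that in any schedule on two machines, each connected component $G_k$ of the bipartite graph must be split along its bipartition: one side goes to $M_1$ and the other to $M_2$, so there are exactly two ways to assign each component. For component $G_k$, assigning $V_1^k$ to $M_1$ and $V_2^k$ to $M_2$ costs $(p_{1,1}^*, p_{2,2}^*)$ on the two machines, while the other assignment costs $(p_{1,2}^*, p_{2,1}^*)$; these are precisely the quantities computed in \cref{alg:unrelatedreduced}. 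I would first argue that $P'_k$ and $P''_k$ capture the load common to both assignment choices, i.e. the unavoidable minimum load each machine must carry from component $G_k$, while the difference job $\job_{n+k}$ (with its single nonzero coordinate) captures the optional excess that the FPTAS is free to place on whichever machine it prefers, thereby faithfully encoding the binary choice as an ordinary unrelated-machines scheduling decision.

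Next I would formalize the encoding of the ``private'' loads. The total common loads $\sum_k P'_k$ and $\sum_k P''_k$ are injected as two dedicated jobs $\job_{n+k+1}$ and $\job_{n+k+2}$, each pinned to its intended machine by giving it processing time $2T$ on the wrong machine, where $T$ is the makespan of the $2$-approximate schedule from \cref{alg:2apxUnrelated}. The crucial point is the choice $2T$: since $T \ge \optcmaxcost$, and in fact $T \le 2\optcmaxcost$ by \cref{theorem:twoapproxunrel}, any schedule that places a pinned job on the forbidden machine incurs makespan at least $2T \ge 2\optcmaxcost$, which is strictly worse than the optimum whenever the optimum is positive; hence the FPTAS, seeking an almost-optimal schedule, will never do so, and the pinned jobs land on their intended machines. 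I would verify that with the pinned jobs correctly placed, the makespan of any feasible assignment of $J' \cup \{\job_{n+k+1}, \job_{n+k+2}\}$ on the two unrelated machines equals the makespan of the corresponding schedule for the original instance, so that $\optcmaxcost$ is preserved exactly and a $(1+\epsilon)$-approximate solution to the reduced instance yields a $(1+\epsilon)$-approximate solution to the original.

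I expect the main obstacle to be making the pinning argument fully rigorous, in particular ruling out the degenerate case $\optcmaxcost = 0$ and confirming that the FPTAS's guarantee is strong enough to force the pinned jobs onto the right machines even in the presence of the $2T$ penalty. One must check that the penalty $2T$ does not inflate the reduced instance's optimum: the honest assignment keeps both pinned jobs on their intended machines, so the optimum of the reduced instance is exactly $\optcmaxcost$, and a $(1+\epsilon)$-factor of that is below $2T$ for $\epsilon \le 1$, guaranteeing the FPTAS cannot afford a misplaced pinned job. Finally, the running time follows by substituting $m = 2$ into the bound of \cref{fptas_unrel}, which gives $\Osymbol(n(2/\epsilon)^{\Osymbol(1)}) = \Osymbol(n\frac{1}{\epsilon})$; the reduction itself is linear, so the overall time is $\Osymbol(n\frac{1}{\epsilon})$ as claimed. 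The reconstruction step in the last line is routine: the independent-set structure guarantees the split produced by the FPTAS corresponds to a valid proper coloring of each component, so unpacking the component jobs back into their original vertices yields a legal schedule.
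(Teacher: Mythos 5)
Your proposal is correct and follows essentially the same route as the paper: the paper's own (much terser) proof likewise argues that the jobs $\job_{n+k+1}, \job_{n+k+2}$ are pinned to their intended machines by the $2T$ penalty and that schedules of the reduced and original instances correspond in both directions with equal $\cmaxcost$, while the component-splitting facts you re-derive inline (unique bipartition of each connected component, common minimum loads $P'_k, P''_k$, binary ``excess'' job) are exactly those the paper establishes in the appendix proof of \cref{theorem:twoapproxunrel}. One minor caveat: your last step $\Osymbol(n(2/\epsilon)^{\Osymbol(1)}) = \Osymbol(n\frac{1}{\epsilon})$ is not a valid identity in general (the hidden exponent need not be $1$), but the paper asserts the same running-time bound with even less justification, so this does not put you behind its proof.
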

	\begin{proof}
	Notice that in any reasonable schedule the minimum processing times on $M_1$ and minimum processing times on $M_2$ are respected by the assignment of the jobs $\job_{n+k+1}, \job_{n+k+2}$.
	The assignment is enforced by the processing times.
	Hence, the schedule produced can be reinterpreted for the original problem.
	Notice that, for any schedule corresponding to the original instance there is a schedule for the modified instance with exactly the same $\cmaxcost$.
	On the other hand any schedule of the prepared jobs can be interpreted as a schedule of the same $\cmaxcost$ of the original jobs.
	\end{proof}
	
	We have the following inapproximability results.
	\begin{theorem}[\cite{Pikies2020scheduling}]
		There is no $c$-approximate algorithm for $R|G = \completebipartite|\cmaxcost$. 
	\end{theorem}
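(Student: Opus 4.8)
The plan is to exhibit an \emph{unbounded gap}: I construct instances in which a \emph{yes} answer to an \NPHClass{} decision problem forces $\optcmaxcost = 0$, while a \emph{no} answer forces $\optcmaxcost \ge 1$. Any $c$-approximation would then return a schedule of makespan $\le c\cdot\optcmaxcost$, hence makespan exactly $0$ on \emph{yes}-instances and makespan $\ge 1$ on \emph{no}-instances; inspecting whether the returned makespan equals $0$ thus decides the \NPHClass{} problem in polynomial time, and since this argument is valid for every finite $c$, no $c$-approximation can exist unless $\PClass = \NPClass$. The structural fact driving the reduction is that in a complete bipartite graph every independent set lies entirely in one part, so each machine may process jobs from at most one side; a feasible schedule therefore implicitly \emph{$2$-labels} the machines as serving $A$ or serving $B$. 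Note that this forces the number of machines to be part of the input: for a fixed $m$ one could enumerate the $2^m$ labelings and run the FPTAS for $Rm||\cmaxcost$ (\cref{fptas_unrel}) on each side, yielding an FPTAS, so genuine hardness needs unboundedly many machines.

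I would reduce from Monotone \ThreeSat, which is \NPCClass{} and in which every clause contains only positive or only negative literals. Given a formula $\varphi$ over variables $y_1,\dots,y_\nu$, introduce one machine $M_i$ per variable $y_i$. For each all-positive clause $(y_{i_1}\vee y_{i_2}\vee y_{i_3})$ create a job in part $A$ that is \emph{free} (processing time $p_{i,j}=0$) on exactly $M_{i_1},M_{i_2},M_{i_3}$ and has processing time $1$ on every other machine; for each all-negative clause $(\lnot y_{j_1}\vee\lnot y_{j_2}\vee\lnot y_{j_3})$ create a job in part $B$ that is free on exactly $M_{j_1},M_{j_2},M_{j_3}$ and costs $1$ elsewhere. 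All $A$-jobs are pairwise compatible, all $B$-jobs are pairwise compatible, and every $A$-job is incompatible with every $B$-job, so the incompatibility graph is exactly $\completebipartite$. Reading the label of $M_i$ as the truth value of $y_i$ (serving $A$ being $y_i$ true), a positive clause is satisfied iff one of its machines is labeled $A$, i.e. iff its $A$-job has a free machine of the matching label, and symmetrically for the negative clauses and $B$-jobs.

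For correctness, a satisfying assignment yields the labeling under which every job sits on a free machine of its own side, giving $\optcmaxcost = 0$; conversely, any schedule of makespan $0$ places each job on a machine where it costs $0$ and, by complete bipartiteness, never mixes the two sides, so the induced machine labeling is a satisfying assignment of $\varphi$. If $\varphi$ is unsatisfiable, then no labeling makes all jobs free-placeable, so in every schedule some job is assigned to a machine on which it costs $1$, and since all processing times are integral, $\optcmaxcost \ge 1$. This is the required $0$-versus-$1$ gap, and the conclusion follows as in the first paragraph. The main thing to get right is precisely this equivalence between achieving makespan $0$ and satisfying $\varphi$: one must verify that the complete-bipartite constraint \emph{alone}, independently of any bound on the makespan, already forces the clean $A$/$B$ partition of the machines, so that the only remaining freedom is exactly a truth assignment. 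Once that is established the gap and the inapproximability are immediate; the sole external ingredient is the \NPHClass{}ness of Monotone \ThreeSat.
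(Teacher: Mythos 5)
This theorem is not actually proved in the paper you were given: it is imported from \cite{Pikies2020scheduling}, so there is no in-paper argument to compare against line by line. Judged on its own merits, your reduction is sound in its core mechanics and is very much in the spirit of the hardness proofs the paper does contain (its $Rm|G=\bipartite|\cmaxcost$ theorem reduces from \OnePreExt and creates a gap with a large processing time $d$). Your key structural observation is correct: a \completebipartite{} incompatibility graph forces every machine to serve jobs from only one side, so a feasible schedule induces a truth assignment on the machines, and encoding the clauses of Monotone \ThreeSat as jobs that are cheap only on the machines of their own variables makes "all jobs cheaply placeable" equivalent to satisfiability. Your side remark that the number of machines must be part of the input (else enumerating the $2^m$ labelings and invoking the FPTAS of \cref{fptas_unrel} on each side yields an FPTAS) is also correct and matches the paper's explicit comment that the reduction behind the cited theorem uses an unbounded number of machines.

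The one genuine weak point is your reliance on zero processing times and on the degenerate "$\optcmaxcost=0$ versus $\optcmaxcost\ge 1$" gap. The paper assumes every processing requirement is a natural number and all of its own reductions use strictly positive times; under the usual convention $p_{i,j}\ge 1$ your instances are not legal, and in any case the multiplicative guarantee is ill-behaved at $\optcmaxcost=0$ (the ratio $\cmaxcost(S)/\optcmaxcost$ is undefined there, so an adversary can dispute what "$c$-approximate" even requires on such instances). The repair is routine and you should make it explicit: give each clause-job processing time $1$ on the machines of its three variables and a large value $D$ elsewhere. If the formula is satisfiable, the schedule induced by a satisfying assignment has makespan at most the number of clauses $\mu$; if it is unsatisfiable, then in any schedule some job must sit on a machine outside its own three (otherwise the induced labeling would satisfy every clause), forcing $\optcmaxcost\ge D$. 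Choosing $D>c\mu$, exactly in the style of the paper's choice $d=\ceil{(cn^{b+1})^{\frac{1}{\epsilon}}}+1$ in its $Rm|G=\bipartite|\cmaxcost$ proof, restores a clean positive gap, and with that change your argument is a complete and correct proof of the cited statement under $\PClass\neq\NPClass$.
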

	However, the number of the machines constructed in the reduction is not bounded.
	Hence, for any fixed $m \ge 3$ we prove the following, by an observation that we could easily solve $\OnePreExt$ using any reasonable algorithm for $Rm|G = \bipartite|\cmaxcost$.

	\begin{theorem}
		For $m \ge 3$ there is no $\Osymbol(n^{b}p_{\max}^{1-\epsilon})$-approximate algorithm for $Rm|G = \bipartite|\cmaxcost$, for any $b > 0$, $\epsilon > 0$, unless $\PClass = \NPClass$.
	\end{theorem}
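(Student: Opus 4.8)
The plan is to reduce from \OnePreExt on bipartite graphs with $k = 3$, which is \NPCClass by \cref{theorem:OnePreExt3Bipartite}, identifying the three available colors with the machines $M_1, M_2, M_3$ and the vertices with unit jobs, exactly in the spirit of \cref{theorem:no_approx}. Given an instance $((V,E),(v_1,v_2,v_3))$, I would build an instance of $Rm|G = \bipartite|\cmaxcost$ whose incompatibility graph is the very same bipartite graph $(V,E)$, so that no gadget jobs are needed and $n = |V|$, together with a large penalty value $P$ fixed later.

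The processing times would encode the precoloring. For an ordinary vertex $v$ I set $p_{i,v} = 1$ for $i \in \{1,2,3\}$; for the precolored vertices I set $p_{1,v_1} = 1$ with $p_{2,v_1} = p_{3,v_1} = P$, and symmetrically $p_{2,v_2} = 1$ (the other two equal to $P$) and $p_{3,v_3} = 1$ (the other two equal to $P$). For $m > 3$ I would additionally set $p_{i,v} = P$ for every job $v$ and every machine $M_i$ with $i \ge 4$, so that even touching a machine beyond $M_3$ already costs $P$; thus $p_{\max} = P$ in all cases.

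Next I would establish the gap. On a \textsf{YES} instance a $3$-coloring $f$ extending the precoloring exists; scheduling each $v$ on $M_{f(v)}$ respects incompatibility, places each precolored vertex on its cheap machine, and uses only unit times, so $\optcmaxcost \le n$. On a \textsf{NO} instance I claim every feasible schedule has $\cmaxcost \ge P$: if some schedule had $\cmaxcost < P$, then no job could sit on a machine where it costs $P$, which forces $v_1 \mapsto M_1$, $v_2 \mapsto M_2$, $v_3 \mapsto M_3$ and confines all jobs to $M_1, M_2, M_3$; but a feasible assignment to three machines is exactly a proper $3$-coloring, so it would extend the precoloring, contradicting the \textsf{NO} answer.

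Finally I would convert the gap into inapproximability. Suppose a polynomial-time $c\,n^{b}p_{\max}^{1-\epsilon}$-approximation existed (for large enough $n$; small instances are solved directly). On a \textsf{YES} instance it returns makespan at most $c\,n^{b}P^{1-\epsilon}\optcmaxcost \le c\,n^{b+1}P^{1-\epsilon}$, whereas on a \textsf{NO} instance any schedule has makespan at least $P$. Choosing $P = \ceil{(c\,n^{b+1})^{1/\epsilon}} + 1$ yields $c\,n^{b+1}P^{1-\epsilon} < P$, so the algorithm's output separates the two cases, deciding \OnePreExt. The point that needs care is exactly this calibration: one must check that forcing $P$ to dominate the approximation guarantee still leaves $P$ polynomially bounded in the input size, which holds precisely because $b$, $\epsilon$ and $c$ are constants (the exponents $1/\epsilon$ and $(b+1)/\epsilon$ are fixed). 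The combinatorial core—that a makespan below $P$ forces the precoloring—is immediate from the construction, so the whole reduction runs in polynomial time and the hypothetical algorithm would place \OnePreExt in \PClass, which is impossible unless $\PClass = \NPClass$.
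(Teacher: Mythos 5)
Your proposal is correct and follows essentially the same route as the paper's proof: the same reduction from \OnePreExt on bipartite graphs with three colors, the same processing-time matrix (unit times on the three "allowed" machine--job pairs, a large penalty value on the forbidden pairs and on all machines beyond the third), the same YES/NO gap of $n$ versus the penalty, and the identical calibration $\ceil{(c\,n^{b+1})^{1/\epsilon}}+1$ for the penalty value. Your explicit justification of the NO-instance lower bound (a makespan below the penalty forces the precoloring and confines jobs to $M_1,M_2,M_3$) is a slightly more detailed write-up of the step the paper states tersely, but it is the same argument.
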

	\begin{proof}
		Assume that there exists $(c \cdot n^{b}p_{\max}^{1-\epsilon})$-approximate algorithm.
		Again consider \OnePreExt for $3$ colors and bipartite graphs.
		Let $(G, (v_1, v_2, v_3))$ be an instance of this problem.
		Let us construct and instance of $Rm|G = \bipartite|\cmaxcost$:
		Let $d \ge 1$ be an integer value, a function of $n, b, \epsilon$ that we specify later. 
		Let $p_{i, j} = d$, for $j \in [3]$ and $i \in [3] \setminus \{j\}$.
		Let $p_{j, j} = 1$, for $j \in [3]$.
		Let $p_{i, j} = 1$, for $i \in [3]$ and $j \in [n] \setminus [3]$.
		Let $p_{i, j} = d$, for $i \in [m] \setminus [3]$ and $j \in [n]$.

		Assume that the instance of \OnePreExt is 'YES'-instance.
		Then there exists schedule of $\cmaxcost \le n$, by extension of the coloring.
		Hence the algorithm applied to 'YES'-instance has to return a schedule $S$ with $\cmaxcost(S) \le c \cdot n^{b+1}d^{1-\epsilon}$.
		Assume that the instance of \OnePreExt is a 'NO'-instance.
		Then in any schedule $\cmaxcost \ge d$.
		
		Hence if $d > (cn^{b+1})^{\frac{1}{\epsilon}}$, for example $d = \ceil{(cn^{b+1})^{\frac{1}{\epsilon}}}+1$, which can be encoded on a polynomial number of bits, then we could distinguish between the instances.
	\end{proof} 
\section{Open Problems}
	\label{section:open}
	Albeit we explored the problem a bit, there are still a few outstanding problems.
	We would like to point out that the research on uniform machines shall be continued.
	By the simple observations we were able to prove that there is no good method in general case.
	However, in a real world we would like to know how to schedule the jobs on a given \emph{fixed} set of machines.
	Hence it is interesting what is the best possible approximation ratio for a given sequence, finite or not, of machine speeds.
	The author of \cite{bodlaender1994scheduling} proved that for $m$ machines of equal speeds the best possible, and easily achievable, approximation ratio is $2$.
	Hence, we think that constructing a method calculating the best possible approximation ratio, under $\PClass \neq \NPClass$ for a given sequence of speeds is a very interesting open problem.
	Moreover, we think that constructing a kind of scheme, i.e. family of algorithm achieving this approximation ratio will be also very interesting.
	The second observation is that our algorithm for random bipartite graphs could be substantially enhanced.
	For example for $p(n) = \osymbol(\frac{1}{n})$ it could be improved, by better assigning the isolated jobs and using them to "balance" the schedule.
	However, the key to enhance the proposed algorithms seems to be an understanding when there are substantial classes spanning two parts of graphs.
	Considering unrelated machines the question is if there exists any reasonable and interesting subcase of these machines that can be approximated well.

\bibliography{bibliography_isaac}
\newpage
	\section*{Appendix}

	\subsection*{Proof of \cref{theorem:Q2CmaxOpt}}
	\setcounter{theorem}{\thecountertheoremuniform}
	\begin{theorem}
		There exists $\Osymbol(n^3)$ time algorithm for $Q2|G = \bipartite, p_j=1|\cmaxcost$.
	\end{theorem}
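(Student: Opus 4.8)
The plan is to exploit the very rigid structure that unit-length jobs and exactly two machines impose on feasible schedules. Since every job has length $1$, the completion time of a machine $M_i$ is simply the number of jobs placed on $M_i$ divided by $s_i$, so a schedule is fully described by how the jobs are split between the two machines. Feasibility means each machine receives an independent set, i.e. an assignment is nothing more than a proper $2$-coloring of $G$, that is a partition $V = A \sqcup B$ into two independent sets with $A$ run on $M_1$ and $B$ on $M_2$. Because $G$ is \bipartite, each connected component $G_\kappa$ admits a unique $2$-coloring up to swapping its two sides, of sizes $a_\kappa$ and $b_\kappa$; the only freedom is, per component, which side is sent to the faster machine $M_1$. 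Writing $k_1 = |A|$ and $k_2 = n - k_1 = |B|$, the makespan equals $M(k_1) = \max\{k_1/s_1, (n - k_1)/s_2\}$, the maximum of an increasing and a decreasing linear function of $k_1$, hence a function that first decreases and then increases, with real minimizer $k_1 = n s_1/(s_1 + s_2)$. Once the set of attainable values of $k_1$ is known, the optimum is found by evaluating $M$ on that set.

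First I would compute the connected components of $G$ and their part sizes $(a_\kappa, b_\kappa)$ in $\Osymbol(n + |E|)$ time. Each component contributes either $a_\kappa$ or $b_\kappa$ to $k_1$, so the attainable values form the set $\{\sum_\kappa c_\kappa : c_\kappa \in \{a_\kappa, b_\kappa\}\} \subseteq \{0, 1, \dots, n\}$. I would determine this set by a subset-sum-style dynamic program: maintain a Boolean reachability array over $\{0, \dots, n\}$, initialized at $0$, and for each component update it by allowing a jump of $a_\kappa$ or of $b_\kappa$. Since $\sum_\kappa (a_\kappa + b_\kappa) = n$, the array has length $n+1$, each of the at most $n$ components is processed in $\Osymbol(n)$ time, and back-pointers let me recover the per-component choice. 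This stage runs in $\Osymbol(n^2)$.

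Then I would scan the attainable values of $k_1$, evaluate $M(k_1)$ at each by comparing the two rationals $k_1/s_1$ and $(n-k_1)/s_2$, keep the minimizer, and reconstruct the coloring from the stored choices, assigning the chosen side of every component to $M_1$ and the other to $M_2$. The total running time is $\Osymbol(n^2 + |E|)$, which is within the claimed $\Osymbol(n^3)$ bound, and the schedule produced is exactly optimal because the program enumerates precisely the attainable splits and $M$ is evaluated without any approximation.

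I do not expect a genuine obstacle along this route; the only points needing care are the uniqueness of each component's $2$-coloring, so that $k_1$ really is the single relevant parameter, and the exactness of the optimisation over the discrete attainable set. As an alternative matching the corollary framing of the paper, one can instead reduce the instance to $R2|G=\bipartite|\cmaxcost$ by setting $p_{i,j} = 1/s_i$ (or, clearing denominators, $p_{1,j} = s_2$ and $p_{2,j} = s_1$, which makes every attainable makespan an integer) and invoke the FPTAS of \cref{theorem:FPTAS}. The delicate part of that route is turning a $(1+\epsilon)$-approximation into the exact optimum: one must choose $\epsilon$ small enough that $(1+\epsilon)\optcmaxcost$ cannot reach the next integer makespan value, and, since a single normalisation need not keep $1/\epsilon$ polynomial, apply the FPTAS to the $\Osymbol(n)$ candidate target makespans separately, each run costing $\Osymbol(n/\epsilon)$; this is exactly where the $\Osymbol(n^3)$ bound would come from.
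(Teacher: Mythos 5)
Your primary argument is correct, but it takes a genuinely different route from the paper. The paper proves this theorem exactly as its main text advertises it: as a corollary of the FPTAS for $R2|G = \bipartite|\cmaxcost$ (\cref{theorem:FPTAS}). It enumerates the $\Osymbol(n)$ candidate job-count splits $(n_1, n_2)$ with $n_1 + n_2 = n$, builds for each one an unrelated-machines instance with processing times $p_{i,j} = n_1 n_2 / n_i$, and runs the FPTAS with $\epsilon = \frac{1}{n}$; exactness follows because any schedule deviating from the target split has makespan at least $n_1 n_2 (1 + \frac{1}{n_1}) > (1+\frac{1}{n})\, n_1 n_2$, so the FPTAS is forced to hit the split exactly whenever it is feasible. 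With $\Osymbol(n)$ calls each costing $\Osymbol(n/\epsilon) = \Osymbol(n^2)$, this yields the $\Osymbol(n^3)$ bound. Your main proof bypasses the FPTAS entirely: you observe that a feasible two-machine schedule of unit jobs is precisely a partition into two independent sets, that each connected component of a bipartite graph has a unique bipartition up to swapping sides, and that the attainable values of $k_1 = |A|$ are therefore computable by a subset-sum dynamic program over components, after which the makespan $\max\{k_1/s_1, (n-k_1)/s_2\}$ is minimized exactly over the reachable set. This is sound (the per-component uniqueness claim holds, including single-vertex components, and the DP with back-pointers is standard), it is self-contained, and it is faster, $\Osymbol(n^2 + |E|)$ versus $\Osymbol(n^3)$. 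What the paper's approach buys instead is economy of exposition: the theorem falls out of machinery already built for the unrelated-machines section, which is why the paper frames it as a corollary. Your closing alternative sketch is essentially the paper's proof, and you correctly identify its one delicate point, namely that a single FPTAS invocation with one normalization need not keep $1/\epsilon$ polynomial, so the FPTAS must be applied separately to the $\Osymbol(n)$ prepared instances.
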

	\begin{proof}
		We assume that the input for an FPTAS for $R2|G = \bipartite|\cmaxcost$ is given by a matrix of size $2 \times n$ of processing times.
		The algorithms proceeds by finding an optimal distribution of jobs between the machines.
		Notice that there are $\Osymbol(n)$ pairs $(n_1, n_2)$ such that $n_1 \in \N, n_2 \in \N$ and $n_1 + n_2 = n$.
		The case $n_1 = n, n_2 = 0$ is trivial to check; hence assume that $n_1 \neq 0$ and $n_2 \neq 0$.
		Now, for each pair we prepare the following instance, to check if such a distribution of jobs is feasible.
		Prepare $2$ machines $M_1, M_2$.
		Construct exactly the same $G = (J, E)$ as provided for the instance of $Q2|G = \bipartite, p_j=1|\cmaxcost$.
		However, let the processing time be $p_{i, j} = \frac{n_1 n_2}{n_i}$.
		Now, assume that the jobs can be scheduled in such a way that $M_1$ and $M_2$ receive exactly $n_1, n_2$ jobs, respectively. 
		Otherwise, $\optcmaxcost = \frac{n_1 \cdot n_1n_2}{n_1} = \frac{n_2 \cdot n_1n_2 }{n_2} = n_1 n_2$.
		Now, observe that if an FPTAS for $R2|G = \bipartite|\cmaxcost$ is applied on the prepared instance with $\epsilon = \frac{1}{n}$, then it has to return a solution corresponding to exactly the desired schedule.
		I.e. machine $M_1$ has to do $n_1$ jobs and machine $M_2$ $n_2$ jobs.
		To see this, for a contradiction assume that in the schedule constructed by the FPTAS there are more jobs assigned to machine $M_1$.
		Then we have for such a schedule $S$ $\cmaxcost(S) \ge \frac{(n_1 + 1)\cdot n_1n_2}{n_1} = n_2(n_1 + 1) = n_2n_1(1 + \frac{1}{n_1})$.
		Hence $\frac{\cmaxcost(S)}{\optcmaxcost} \ge 1 + \frac{1}{n_1} > 1 + \frac{1}{n}$, due to the fact that $n_1 \in \{1, \ldots, n-1\}$.
		Similar observations hold for $M_2$.
	\end{proof}
	
	\subsection*{Proof of \cref{lemma:qapxcomplexity}}
	\setcounter{lemma}{\thecounterqapxcomplex}
	\begin{lemma}
		\cref{algorithm:SqrtPtotalForQBipartiteCmax} has time complexity $\Osymbol(|J|^2 + |J||E| + |M|\log |M|)$.
	\end{lemma}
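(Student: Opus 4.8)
The plan is to walk through \cref{algorithm:SqrtPtotalForQBipartiteCmax} line by line, bound the cost of each step, and take the dominating term. As preprocessing I would sort the machines by non-increasing speed, which costs $\Osymbol(|M|\log|M|)$, and compute $\psum$, $p_{\max}$ and the threshold $\sqrt{\psum}$ in $\Osymbol(|J|)$ time. The guard in Line~1 is cheap: if $\psum \le 4$ then, since the $p_j$ are positive integers, $|J| \le 4$, so an optimal schedule uses only the $\Osymbol(1)$ fastest machines and can be found by brute force in constant time once the machines are sorted. The claim then follows by showing that every remaining line costs $\Osymbol(|J|^2 + |J||E| + |M|\log|M|)$, with the independent-set computation being the only genuinely expensive step.

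The main obstacle is Line~2, the computation of a maximum-weight independent set $I$ required to contain every job of processing requirement at least $\sqrt{\psum}$. First I would collect these heavy jobs ($\Osymbol(|J|)$) and test whether they form an independent set ($\Osymbol(|E|)$); if not, $I$ does not exist and the algorithm only returns $S_1$. Otherwise, forcing the heavy jobs into $I$ forces their neighbours out, so the task reduces to a maximum-weight independent set on the bipartite graph obtained by deleting the heavy jobs together with their neighbourhoods. In a bipartite graph this is the complement of a minimum-weight vertex cover, which by LP-duality equals the value of a minimum $s$--$t$ cut in the standard network with source arcs of capacity $w(v)$ into one side, sink arcs of capacity $w(v)$ out of the other, and infinite-capacity arcs for the edges of $G$. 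This network has $\Osymbol(|J|)$ vertices and $\Osymbol(|J|+|E|)$ arcs, so a strongly polynomial maximum-flow algorithm running in $\Osymbol(VE)$ time computes the cut, and hence $I$, in $\Osymbol(|J|^2 + |J||E|)$ time. This is where the $|J|^2 + |J||E|$ term in the bound comes from; I expect securing a strongly polynomial guarantee here, rather than a pseudo-polynomial one depending on the weights $p_j$, to be the delicate point.

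The schedule $S_1$ in Line~3 is produced by \cref{alg:FPTASUnrelated} with $\epsilon = 1$. Its reduction phase processes each connected component once, summing processing times over the two sides, which is $\Osymbol(|J|+|E|)$, and the invoked FPTAS runs in $\Osymbol(|J|\cdot\frac{1}{\epsilon}) = \Osymbol(|J|)$ time by \cref{theorem:FPTAS}; thus Line~3 costs $\Osymbol(|J|+|E|)$. For Line~5 I would compute $\optcmaxcostlb$ as the maximum of the three minimal times meeting the three capacity requirements. The only nontrivial one is the least $t$ with $\sum_i \lfloor t s_i\rfloor \ge W$ for a target weight $W$: starting from the continuous estimate $t_0 = W/\sum_i s_i$ one has $\sum_i \lfloor t_0 s_i\rfloor > W - |M|$, so at most $|M|$ further step-function breakpoints must be visited, and advancing to the next breakpoint across all machines via a heap costs $\Osymbol(\log|M|)$ each. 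Hence $\optcmaxcostlb$, and the rounded-down capacities of Line~6, are obtained in $\Osymbol(|M|\log|M|)$ time.

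The remaining lines are all dominated. Choosing $k$ in Line~7 and $k'$ in Line~9 are single scans over the sorted machines accumulating rounded capacities, $\Osymbol(|M|)$ each, after the sums $\sum_{\job_j\in J\setminus I}p_j$ and $\sum_{\job_j\in J_1'}p_j$ are computed in $\Osymbol(|J|)$. The inequitable colouring of $J\setminus I$ in Line~8 costs $\Osymbol(|J|+|E|)$, as already noted for bipartite graphs. Building $S_2$ in Line~10 is list scheduling of the jobs onto the designated machine blocks, $\Osymbol(|J|+|M|)$, and comparing the makespans of $S_1$ and $S_2$ in Line~11 is $\Osymbol(|J|)$. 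Summing over all lines, the cost is $\Osymbol(|J|^2 + |J||E| + |M|\log|M|)$, as claimed.
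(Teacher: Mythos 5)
Your proof is correct and follows essentially the same route as the paper's: a line-by-line cost analysis in which the maximum-weight independent set is found via a minimum $s$--$t$ cut (strongly polynomial max-flow) on the bipartite graph after deleting the closed neighbourhood of the heavy jobs, and $\optcmaxcostlb$ is computed by a heap-based sweep over at most $|M|$ capacity breakpoints starting from the continuous relaxation. If anything, your write-up is slightly more complete than the paper's, since you also account explicitly for the brute-force guard of Line~1 and for the cost of computing $S_1$ via \cref{alg:FPTASUnrelated} with $\epsilon = 1$.
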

	\begin{proof}
	\begin{itemize}
		\item First we have to check if the vertices of size at least $\sqrt{\psum}$ form independent set, in time $\Osymbol(|J|^2)$.
		\item If yes, then we have to find an independent set of highest weight in the graph formed by removal of the closed neighborhood of the independent set.
		This can be done by finding a minimum $S-T$ cut with a flow network corresponding to the bipartite graph.
		It can be done in $\Osymbol(|J||E|)$-time \cite{OrlinMaxFlowsInMN13} by first finding max flow, then transforming it to a minimum cut in $\Osymbol(|J| + |E|)$ time.
		\item Generalized to weighted graphs inequitable coloring can be calculated in $\Osymbol(|J| + |E|)$ time, by aggregating the colorings of the components appropriately.
		\item Calculating $\optcmaxcostlb$ can be done in $\Osymbol(|M|\log |M|)$.
		The time for the first condition is estimated as follows.
		First calculate "relaxed" time (assume that the jobs can be split arbitrarily).
		Round down the capacities (decreasing the capacities by at most $|M|$).
		Construct a heap composed of $|M|$ entries, composed of times when the capacity of the machine increases to the next integer number.
		Take the smallest time, replace the entry in heap by the next time when the capacity of machine corresponding to entry increases to the next integer.
		Repeat as necessary.
		It can be done in $\Osymbol(|M|\log |M|)$ time, due to the heap reorganization.
		The time for the second condition can be calculated similarly.
		The third condition is trivial.
		\item Calculating $k$ and $k'$ is trivial and the the scheduling of jobs on the respective sets of machine can be done in $\Osymbol(|J|+|M|)$ time.
	\end{itemize}
	Together this gives the desired complexity.
	\end{proof}

	\subsection*{Proof of \cref{lemma:vtwotonminusalpha}}
	\setcounter{lemma}{\thecountervtwotonminusalpha}
	\begin{lemma}
		Let $p(n) = \frac{a}{n}$, for some $a > 0$.
		Then for an inequitable coloring $(V_1', V_2')$ of $\gnnp$, $\frac{|V_2'|}{n - \alpha(\gnnp)} \le 1.6$  a.a.s.
	\end{lemma}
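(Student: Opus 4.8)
The plan is to reduce the denominator to the matching number and then combine the two a.a.s.\ estimates already available for $\gnnp$. Since $\gnnp$ is bipartite, König's theorem together with the Gallai identity gives $\alpha(\gnnp) = |V(\gnnp)| - \mu(\gnnp)$, so the number of vertices lying outside a maximum independent set — that is, the jobs that every proper schedule is forced to place off $M_1$ — is exactly the matching number $\mu(\gnnp)$. I would therefore read the denominator through $\mu(\gnnp)$ and bound the ratio from above by dividing an upper estimate for $|V_2'|$ by a lower estimate for $\mu(\gnnp)$. For the numerator I would invoke the preceding lemma, which yields a.a.s.\ $|V_2'| \le n(1 - (1 - \frac{a}{n})^n) + \osymbol(n)$, and record that for fixed $a$ one has $(1 - \frac{a}{n})^n = e^{-a}(1 + \osymbol(1))$, so that a.a.s.\ $|V_2'| \le n(1 - e^{-a}) + \osymbol(n)$. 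For the denominator I would use the lower bound $\mu(\gnnp) \ge (1 - e^{(e^{-a}-1)})n$ a.a.s.\ of \cite{MastinJGreedyOnline2013}.

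Both estimates hold a.a.s., so by a union bound they hold simultaneously a.a.s. On that event I would combine them to get
\[
\frac{|V_2'|}{\mu(\gnnp)} \le \frac{n(1 - e^{-a}) + \osymbol(n)}{(1 - e^{(e^{-a}-1)})\,n} = \frac{1 - e^{-a}}{1 - e^{(e^{-a}-1)}} + \osymbol(1).
\]
This reduces the lemma to the purely analytic claim that the constant prefactor $f(a) = \frac{1 - e^{-a}}{1 - e^{(e^{-a}-1)}}$ stays strictly below $1.6$ for every fixed $a > 0$.

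For the final step I would analyse $f(a)$ by the substitution $t = e^{-a} \in (0,1)$, turning it into $g(t) = \frac{1 - t}{1 - e^{t-1}}$. Elementary calculus (checking the sign of $g'$) shows $g$ is monotone decreasing on $(0,1)$, with $g(t) \to 1$ as $t \to 1^-$ (a $0/0$ indeterminacy resolved by L'Hôpital, giving $\lim = 1$) and $g(t) \to \frac{1}{1 - e^{-1}}$ as $t \to 0^+$; hence $\sup_{a>0} f(a) = \frac{1}{1 - e^{-1}} \approx 1.582 < 1.6$. Because this supremum is a strict constant below $1.6$, for $n$ large the additive $\osymbol(1)$ term is dominated by the slack $1.6 - \frac{1}{1-e^{-1}}$, so a.a.s.\ $\frac{|V_2'|}{\mu(\gnnp)} \le 1.6$, as claimed. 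The one genuinely delicate point is establishing the monotonicity of $g$ — equivalently, that $f$ has no interior maximum exceeding its boundary value $\frac{1}{1-e^{-1}}$; once that is in hand, the remainder is routine bookkeeping with the $\osymbol(\cdot)$ terms and the union bound.
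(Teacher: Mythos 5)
Your proposal is correct and follows essentially the same route as the paper's own proof: upper-bound $|V_2'|$ by the preceding lemma, lower-bound the denominator (read as $\mu(\gnnp)$ via K\H{o}nig/Gallai) by the Mastin--Jaillet matching bound, intersect the two a.a.s.\ events, and reduce to showing $\sup_{a>0}\frac{1-e^{-a}}{1-e^{e^{-a}-1}} = \frac{e}{e-1} \approx 1.582 < 1.6$. The only difference is cosmetic: you substitute $t=e^{-a}$ and assert the sign of $g'$ where the paper differentiates in $a$ explicitly; your claimed monotonicity does hold, since the numerator of $g'(t)$ is $(2-t)e^{t-1}-1$, which is increasing on $(0,1)$ and vanishes at $t=1$, hence negative there.
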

	\begin{proof}
		Using previous observations we have that $|V_2'| \le n(1 - (1 + \frac{a}{n})^n) + \osymbol(n)$.
		Similarly we have $n - \alpha(\gnnp) \ge n(1-e^{(e^{-a}-1)})$.
		Hence we have
		\[
		\frac{|V_2|}{n - \alpha(\gnnp)} \le \frac{1 - (1 - \frac{a}{n})^n}{1 - e^{e^{-a}-1}} + \osymbol(1).
		\]
		Therefore, for big $n$ we have
		\[
		\lim\limits_{n \rightarrow \infty} \frac{1 - (1 - \frac{a}{n})^n}{1 - e^{e^{-a}-1}} = \frac{1 - \frac{1}{e^a}}{1 - e^{e^{-a}-1}}.
		\]
		It means that the ratio depends only on $a$ in a way not depending on $n$.
		We would like to find $a$ such that this value is highest possible.
		We do this by a high-school mathematical analysis.
		First we check the derivative,
		\[
		\frac{d}{da} \frac{1 - \frac{1}{e^a}}{1 - e^{e^{-a}-1}} = \frac{e^{-a}}{1 - e^{e^{-a}-1}} - \frac{e^{-a+e^{-a}-1}(1-e^{-a})}{(1-e^{e^{-a}-1})^2}.
		\]
		It is unclear if it is nonnegative for $a \in (0, \infty)$, let us observe that it is nonnegative when 
		\[
		1 -2e^{e^{-a}-1} + e^{e^{-a}-1-a} \ge 0.
		\]
		Hence let us calculate the derivative of left hand side expression 
		\[
		\frac{d}{da}(1 -2e^{e^{-a}-1} + e^{e^{-a}-1-a})  = e^{-2a + e^{-a}-1}(e^{a}-1),
		\]
		which is nonnegative.
		Hence let us calculate the limit of
		\[
		\lim\limits_{a \rightarrow 0} (1 -2e^{e^{-a}-1} + e^{e^{-a}-1-a} )= 0.
		\]
		It means that $1 -2e^{e^{-a}-1} + e^{e^{-a}-1-a}$ is nonnegative for $a \in (0, \infty)$, and it means that derivative of $\frac{1 - \frac{1}{e^a}}{1 - e^{e^{-a}-1}}$ is nonnegative.
		Hence let us check the limit at infinity
		\[
		\lim\limits_{a \rightarrow \infty} \frac{1 - \frac{1}{e^a}}{1 - e^{e^{-a}-1}} = \frac{e}{e-1} < 1.6.
		\]
	\end{proof}

	\subsection*{Proof of \cref{theorem:twoapproxunrel}}
\setcounter{theorem}{\thecountertheoremnunrel}
\begin{theorem}
	\cref{alg:2apxUnrelated} is $2$-approximate $\Osymbol(n)$-time algorithm for $R2|G = \bipartite|\cmaxcost$.
\end{theorem}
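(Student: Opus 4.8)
The plan is to first pin down the combinatorial structure that makes the reduction in \cref{alg:unrelatedreduced} correct, and then to run a short exchange argument for the approximation ratio. Since $G$ is bipartite and only two machines are available, I would begin by observing that a proper $2$-colouring of a \emph{connected} component $G_k=(V_1^k\cup V_2^k,E_k)$ with at least one edge is unique up to swapping the two colours; consequently every feasible schedule must place all of $V_1^k$ on one machine and all of $V_2^k$ on the other. Hence each component admits exactly two feasible assignments: assignment~A ($V_1^k\to M_1,\ V_2^k\to M_2$) with load pair $(p_{1,1}^*,p_{2,2}^*)$, and assignment~B ($V_1^k\to M_2,\ V_2^k\to M_1$) with load pair $(p_{1,2}^*,p_{2,1}^*)$; an isolated vertex is the degenerate case where one part is empty and both options are available. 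I would note that a feasible schedule of the original instance is therefore nothing but a per-component choice of A or B.

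Next I would verify that \cref{alg:unrelatedreduced} faithfully encodes these choices. Writing $L_1:=\sum_k P'_k$ and $L_2:=\sum_k P''_k$, I would check in each of the three branches that $P'_k=\min\{p_{1,1}^*,p_{1,2}^*\}$ and $P''_k=\min\{p_{2,1}^*,p_{2,2}^*\}$ are exactly the loads every feasible schedule is forced to put on $M_1$ and $M_2$ by component $k$, while the meta-job $\job_{n+k}$ carries the \emph{excess} $p_{1,n+k}=|p_{1,1}^*-p_{1,2}^*|$ on $M_1$ and $p_{2,n+k}=|p_{2,1}^*-p_{2,2}^*|$ on $M_2$. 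The point is that placing $\job_{n+k}$ on $M_1$ reproduces the load pair $(\max\{p_{1,1}^*,p_{1,2}^*\},\min\{p_{2,1}^*,p_{2,2}^*\})$ and placing it on $M_2$ reproduces $(\min\{p_{1,1}^*,p_{1,2}^*\},\max\{p_{2,1}^*,p_{2,2}^*\})$; matching these against the two real load pairs A and B (and using that in the dominated branches the excess is $0$, so the choice is free) shows that the meta-instance has exactly the same set of achievable makespans as the original. This yields the lower bounds $\optcmaxcost\ge L_1$ and $\optcmaxcost\ge L_2$, and, writing $M_1^{opt},M_2^{opt}$ for the two machine loads in an optimal schedule, $\max\{M_1^{opt},M_2^{opt}\}=\optcmaxcost$.

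For the ratio I would analyse the greedy step, which assigns each $\job_{n+k}$ to its cheaper machine. Let $A_1$ (resp.\ $A_2$) be the total meta-job load the algorithm puts on $M_1$ (resp.\ $M_2$), so its makespan is $\max\{L_1+A_1,L_2+A_2\}$; assume without loss of generality that $M_1$ is the busier machine, i.e.\ the makespan equals $L_1+A_1$. I would split the meta-jobs the algorithm placed on $M_1$ according to where the optimum places them. For those the optimum also puts on $M_1$, their total $M_1$-size is at most $M_1^{opt}-L_1$; for those the optimum puts on $M_2$, each has $p_{1,n+k}\le p_{2,n+k}$ (this is exactly why greedy chose $M_1$), so their total $M_1$-size is at most their total $M_2$-size in the optimum, which is at most $M_2^{opt}-L_2$. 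Hence $A_1\le (M_1^{opt}-L_1)+(M_2^{opt}-L_2)$, and the crucial cancellation gives
\[
L_1+A_1\ \le\ M_1^{opt}+M_2^{opt}-L_2\ \le\ 2\optcmaxcost-L_2\ \le\ 2\optcmaxcost .
\]
Finally I would bound the running time: identifying the components and their bipartitions and accumulating the sums $p_{i,l}^*$ is linear, and both the greedy assignment and the reconstruction touch each job once, giving the claimed $\Osymbol(n)$ bound.

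I expect the main obstacle to be the second paragraph — the case analysis confirming that \cref{alg:unrelatedreduced} reproduces the two genuine load pairs in every branch, in particular getting the bookkeeping of private loads versus meta-job excesses exactly right so that the $L_1$ terms cancel cleanly in the final estimate; once that correspondence is nailed down, the exchange argument for the $2$-approximation is short.
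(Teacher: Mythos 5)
Your proposal is correct and, up to the final step, follows the same route as the paper: the unique $2$-colouring of each connected component, the identification of $P'_k=\min\{p_{1,1}^*,p_{1,2}^*\}$ and $P''_k=\min\{p_{2,1}^*,p_{2,2}^*\}$ as loads forced on $M_1$ and $M_2$ by every feasible schedule, and the meta-job carrying the two excesses are exactly the paper's analysis of \cref{alg:unrelatedreduced}. The difference is in how the ratio $2$ is derived. The paper argues in aggregate: since the greedy step incurs the minimum possible total excess $T_{extra}$ (in your notation $\sum_k\min\{p_{1,n+k},p_{2,n+k}\}$), the makespan of \cref{alg:2apxUnrelated} is at most $\max\{T_1,T_2\}+T_{extra}$, while every schedule has total work at least $T_1+T_2+T_{extra}$, hence $\optcmaxcost\ge(T_1+T_2+T_{extra})/2$ and the bound follows. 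You instead run a per-job exchange argument against an optimal assignment, charging each excess that greedy placed on the busier machine either to the optimum's excess on that same machine or, via $p_{1,n+k}\le p_{2,n+k}$, to the optimum's excess on the other machine; this yields the marginally sharper estimate $2\optcmaxcost-L_2$. Both arguments are sound; the paper's averaging argument is shorter, yours makes the loss per component explicit. One point to tighten when writing this up: in the two dominated branches the meta-instance is \emph{not} makespan-equivalent to the original instance --- the dominated assignment of such a component is not representable by any placement of its (zero-length) meta-job, so the two instances do not have ``exactly the same set of achievable makespans''. What you actually need, and what does hold, is one-sided: every meta-schedule maps back to an original schedule with the same machine loads, and some optimal original schedule (obtained by replacing each dominated per-component choice by the dominating one, which never increases either load) is representable in the meta-instance; your exchange argument should be phrased against this canonical optimum when you split greedy's meta-jobs according to where the optimum places them.
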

\begin{proof}
	First consider \cref{alg:unrelatedreduced}.
	The algorithm consists of an observation that for any connected component $G_k$ the jobs in a partition of the component can be merged into a single one. 
	Then parts $V_1^k, V_2^k$ of component $G_k$ can be schedule in two ways.
	Either incurring processing times $p_{1,1}^*$ on $M_1$ and $p_{2,2}^*$ on $M_2$, or $p_{1,2}^*$ on $M_1$ and $p_{2,1}^*$ on $M_2$.
	If $p_{1,1}^* \le p_{2,1}^*$ and $p_{2,2}^* \le p_{1,2}^*$ ($p_{2,1}^* \le p_{1,1}^*$ and $p_{1,2}^* \le p_{2,2}^*$), then scheduling $V_1^k$ on $M_1$ and $V_2^k$ on $M_2$ ($V_1^k$ on $M_2$ and $V_2^k$ on $M_1$) is clearly superior to the other assignment.
	Otherwise, in any case on the machines processing times $\min\{p_{1,1}^*, p_{1,2}^*\}$ and $\min\{p_{2,1}^*, p_{2,2}^*\}$ are to be accrued on $M_1$ and $M_2$, respectively.
	It means that scheduling of such component can be reduced to a decision of incurring $\max\{p_{1,1}^*, p_{1,2}^*\} - \min\{p_{1,1}^*, p_{1,2}^*\}$ processing time on $M_1$ or $\max\{p_{2,1}^*, p_{2,2}^*\} - \min\{p_{2,1}^*, p_{2,2}^*\}$ processing time on $M_2$ in addition to these minimums.
	To see this:
	\begin{itemize}
		\item Assume that $\max\{p_{1,1}^*, p_{1,2}^*\} = p_{1,1}^*$, then $\max\{p_{2,1}^*, p_{2,2}^*\} = p_{2,1}^*$.
		In this case the first decision is equivalent to scheduling $V_1^k$ on $M_1$ and $V_2^k$ on $M_2$.
		This incurs $\max\{p_{1,1}^*, p_{1,2}^*\} - \min\{p_{1,1}^*, p_{1,2}^*\} + \min\{p_{1,1}^*, p_{1,2}^*\} = \max\{p_{1,1}^*, p_{1,2}^*\}$ on $M_1$ and $p_{2,2}^* = \min\{p_{2,1}^*, p_{2,2}^*\}$ on $M_2$.
		The second decision is equivalent to scheduling $V_1^k$ on $M_2$ and $V_2^k$ on $M_1$.
		\item Assume that $\max\{p_{1,1}^*, p_{1,2}^*\} = p_{1,2}^*$, then $\max\{p_{2,1}^*, p_{2,2}^*\} = p_{2,2}^*$.
		In this case the first decision is equivalent to scheduling $V_1^k$ on $M_2$ and $V_2^k$ on $M_1$.
		The second decision is equivalent to scheduling $V_1^k$ on $M_1$ and $V_2^k$ on $M_2$.
	\end{itemize} 
	
	To see that the schedule constructed above is $2$-approximate notice that for each component the minimum additional processing time is chosen.
	Hence the sum of additional processing times assigned $T_{extra}$ is minimum.
	For each of the machines also the minimum processing times is incurred $T_{1}, T_{2}$, respectively, as in any schedule.
	Hence, the length of the schedule is at most $\max\{T_1, T_2\} + T_{extra}$.
	On the other hand in any schedule its length is at least $(T_1 + T_2 + T_{extra})/2$.
	Clearly, the $\cmaxcost$ of the constructed schedule is at most $2$ times the optimum.
	
	About the procedure to reconstruct the original jobs from the constructed jobs:
	Consider the values $P'$ and $P''$ in pairs, where $k$-th pair is $(P'[k], P''[k])$.
	If the pair is associated with dummy job it means that the processing times of the jobs in pairs are directly corresponding to parts of some component.
	If the $k$-th is associated with non-dummy job, consider where this $k$-th constructed job was assigned.
	If it was assigned to $M_1$, then it means that the processing time on $M_1$ can be divided into $\max\{p_{1,1}^*, p_{1,2}^*\}$ of component $G^k$ and processing time of other parts.
	Similarly, the processing time on $M_2$ is $\min\{p_{2,1}^*, p_{2,2}^*\}$ and processing time of other parts.
	Notice that this corresponds to assignment of component's parts to the machines.
	Similarly, if it was assigned to $M_2$.
\end{proof}	

\end{document}